\newcolumntype{C}[1]{>{\centering\arraybackslash}p{#1}}
\newcommand{\assale}[1]{{\color{blue}Assalé: #1}}
\newcommand{\ploc}[1]{{\color{green}Ploc: #1}}
\newcommand{\Var}{\mathcal{V}}
\newcommand{\Q}{\mathcal{Q}}
\newcommand{\AQ}{\mathcal{AQ}}
\newcommand{\QA}{\mathcal{QA}}
\newcommand{\ZQ}[2]{\mathcal{Z}^{#1}_{\mathcal{Q}^{#2}}}
\newcommand\rr{\mathbb{R}}
\newcommand\ux{\begin{pmatrix} x\\1 \end{pmatrix}}
\newcommand\uxt{\begin{pmatrix} x\\1 \end{pmatrix}^\intercal}
\newcommand\tr{\operatorname{tr}}
\newcommand\Mat{\mathbf{M}}
\renewcommand\st{\operatorname{s.t.}}
\newcommand\cc{\mathbf{C}}
\newcommand\ccm{\mathbf{C}^m}
\newcommand\ccmp{\mathbf{C}^{m+p}}
\newcommand\CCm{\mathcal{C}^m}
\newcommand\argmin{\operatorname{argmin}}
\newcommand\midop{\operatorname{mid}}
\newcommand\ext{\operatorname{ext}}
\newcommand\INT{\mathcal{I}}
\newcommand\sint{\sqcup_{\INT}}
\newcommand\bINT{\overline{\INT}}
\newcommand\best[1]{\cellcolor[gray]{0.8}#1}
\title{Quadratic Zonotopes}
\author{Assalé Adjé, Pierre-Loïc Garoche, Alexis Werey}
\institute{Onera, the French Aerospace Lab, France\\Université de Toulouse,
  Toulouse, France}
\begin{document}
\mainmatter
\maketitle


\begin{abstract}
Affine forms are a common way to represent convex sets of
$\mathbb{R}$ using a base of error terms $\epsilon \in [-1, 1]^m$. Quadratic
forms are an extension of affine forms enabling the use of quadratic error terms
$\epsilon_i \epsilon_j$. 

In static analysis, the zonotope domain, a relational abstract domain based on
affine forms has been used in a wide set of settings, e.g. set-based simulation
for hybrid systems, or floating point analysis, providing relational
abstraction of functions with a cost linear in the number of errors terms.

In this paper, we propose a quadratic version of zonotopes.
We also present a new algorithm based on semi-definite
programming to project a quadratic zonotope, and therefore quadratic forms, to
intervals. All presented material has been implemented and applied on
representative examples.

\keywords{affine form, quadratic form, affine vectors, quadratic vectors, zonotopes,
  static analysis}
\end{abstract}
\section{Affine arithmetics and Static Analysis}
\label{sec-1}
\paragraph{Context.}
Affine arithmetics was introduced in the 90s by Comba and
Stolfi~\cite{Comba93affinearithmetic} as an alternative to interval
arithmetics, allowing to avoid some pessimistic computation like the
cancellation:
\[
x - x = [a, b] -_\mathcal{I} [a,b] = [a - b, b - a ] \neq [0, 0]
\] 
It relies on a representation of convex subsets of $\mathbb{R}$ keeping
dependencies between variables: e.g. $x \in [-1, 1]$ will represented as $0 + 1
* \epsilon_1$ while another variable $y \in [-1, 1]$ will be represented by
another $\epsilon$ term: $y = 0 + 1 * \epsilon_2$. Therefore $x - x$ will be
precisely computed as $ \epsilon_1 - \epsilon_1 = 0$ while $x - y$ will result
in $\epsilon_1 - \epsilon_2$, i.e. denoting the interval $[-2,2]$.

In static analysis, affine forms lifted to abstract environments, as vectors of
affine forms, are a friendly alternative to costly relational domains. They
provide cheap and scalable relational abstractions: their complexity is linear
in the number of error terms -- the $\epsilon_i$ -- while most relational
abstract domains have a complexity at least cubic. Since their geometric
concretization characterizes a zonotope, i.e. a symmetric convex
polytope, they are commonly known as zonotopic abstract domains.

However since zonotopes are not fitted with lattice structure, their use in pure
abstract interpretation using a Kleene iteration schema is not common. The
definition of an abstract domain based on affine forms requires the definition
of an upper bound and lower bound operators since no least upper bound and
greatest lower bound exist in general. Choices vary from the computation of a
precise minimal upper bound to a coarser upper bound that tries to maintain
relationship among variables and error terms.  For example, the choices
of~\cite{DBLP:conf/cav/GhorbalGP09} try to compute such bounds while preserving
as much as possible the error terms of the operands, providing a precise way to
approximate a functional.

\vspace{-0,2cm}

\paragraph{Related works.}
Zonotopes are mainly used in static analysis to support the formal verification
of critical systems performing floating point computation, e.g. aircraft
controllers. One can mention a first line of works in which
zonotopes are used to precisely over-approximate set of values:
\begin{inparaenum}
\item hybrid system simulation, for example set-based simulation~\cite{DBLP:conf/rsp/BouissouMC12};
\item or floating point error propagation~\cite{DBLP:conf/sas/Goubault13}.
\end{inparaenum}
In those cases, a join operator is not necessarily needed nor a partial order check.

A second line of work tries to rely on this representation to perform classical
abstract interpretation. Zonotopes are then fitted with a computable partial order and a
join 
e.g.~\cite{DBLP:journals/entcs/GoubaultGP12,DBLP:conf/sas/GoubaultPV12}. The
approach of~\cite{DBLP:conf/cav/GhorbalGP09} is available in the open-source
library APRON~\cite{DBLP:conf/cav/JeannetM09}.



Back in the applied mathematics community, variants of affine arithmetics have
been studied in~\cite{DBLP:journals/rc/MessineT06} among which the quadratic
extension of affine forms allowing to express terms in $\epsilon_i\epsilon_j$.

\vspace{-0,2cm}

\paragraph{Contributions.}
In the paper, we ambition at using zonotopes based on this quadratic
arithmetics. 
We propose an abstraction based on an extension of zonotopic abstract domains
to quadratic arithmetic. Our approach fully handles floating point computations
and performs the necessary rounding to obtain a sound result. Furthermore, while
keeping the complexity reasonable, i.e. quadratic instead of linear in the
error terms, quadratic forms are best suited to represent non linear computations
such as multiplication. Interestingly, the geometric concretization a set of
quadratic forms characterizes a non convex, non symmetric subset of
$\mathbb{R}^n$, while still being fitted with an algebraic structure.

\vspace{-0,2cm}

\paragraph{Paper structure.} A first section presents quadratic forms as
introduced in ~\cite{DBLP:journals/rc/MessineT06}. Then Sec.~\ref{sec-3} presents our
extension of zonotopes to quadratic arithmetics. Sec.~\ref{sec:floats} motivates
our floating point implementation. Sec.~\ref{sec-5} proposes a more precise way
to project quadratic zonotopes to intervals using semi-definite programming (SDP) solvers. Finally
Sec.~\ref{sec:exp} addresses our implementation and the evaluation of the approach
with respect to existing domains (intervals, affine zonotopes variants).

\section{Formal Preliminaries: Quadratic forms}

We formally introduce here some definitions
from~\cite{DBLP:journals/rc/MessineT06} defining quadratic forms. We refer the
interested reader to this publication for a wider comparison in a global
optimization setting.

\subsubsection{Quadratic forms.}
A (not so) recent extension of affine arithmetics is quadratic arithmetics~\cite{DBLP:journals/rc/MessineT06}.  
It is a comparable representation of values fitted with similar arithmetics operators but quadratic
forms also considers products of two errors terms, i.e. in $\epsilon_i \epsilon_j$. A quadratic form
is also parametrized by additional error terms used to encode non linear errors:
$\epsilon_\pm \in [-1,1], \epsilon_+ \in [0,1]$ and $\epsilon_- \in [-1, 0]$.
Let us define the set $\ccm\triangleq[-1,1]^m\times [-1,1]\times [0,1]\times [-1,0]$.
A quadratic form on $m$ noise symbols is a function $q$ from $\ccm$ to $\rr$ defined for all $t=(\epsilon,\epsilon_{\pm},\epsilon_+,\epsilon_-)\in\ccm$by $q(t)=c + b^\intercal \epsilon +\epsilon^\intercal A \epsilon + c_\pm \epsilon_\pm + c_- \epsilon_- +c_+ \epsilon_+$. A quadratic form is thus characterized by a $6$-tuple $\left( c , (b)_m, (A)_{m^2}, c_\pm, c_+, c_- \right) \in\mathbb{R} \times \mathbb{R}^m \times \mathbb{R}^{m \times m} \times \mathbb{R}_+\times \mathbb{R}_+ \times \mathbb{R}_+$. To simplify, we will use the terminology quadratic form for both the function defined on $\ccm$ and the $6$-tuple. 
We denote by $\Q^m$ the set of quadratic forms.
\subsubsection{Geometric interpretation.}
Let $q\in\Q^m$. Since $q$ is continuous, the image of $\ccm$ by $q$ is a closed bounded interval. In our context, the image of $\ccm$ by $q$ defines its geometric interpretation. 
\begin{definition}[Concretization of quadratic forms]
  \label{def:gammaQ}
The concretization map of a quadratic form $\gamma_{\Q}:  \Q^m  \rightarrow  \wp (\mathbb{R})$ is defined by:
 \[
  \gamma_{\Q} (q) = \left\{ x \in \mathbb{R} \left|\exists\, t \in\ccm\ \st\ x = q(t)\right\}\right.
 \]
\end{definition}
\begin{remark}
\label{notorder}
We can have $\gamma_{\Q}(q)=\gamma_{\Q}(q')$ with $q\neq q'$ e.g. $q=\epsilon_1^2$ and $q'=\epsilon_2^2$. 
\end{remark}

The concretization of $q$ consists in computing the infimum and the supremum of $q$ over $\ccm$ i.e. the values: 
\begin{equation}
\label{exactbounds}
\mathbf{b}^q\triangleq\inf\{q(x)\mid x\in \ccm\} \qquad \text{and}\qquad \mathbf{B}^q\triangleq \sup\{q(x)\mid x\in \ccm\}\enspace.
\end{equation}
To compute $\mathbf{b}^q$ and $\mathbf{B}^q$ is reduced to solve a non-convex quadratic problem which is NP-hard~\cite{Vavasis199073}. 
The approach described in~\cite{DBLP:journals/rc/MessineT06} uses simple inequalities to give a safe over-approximation of $\gamma_\Q(q)$. The interval provided by this approach is $[\mathbf{b}^q_{MT},\mathbf{B}^q_{MT}]$ defined as follows:
\begin{equation}
\label{touamimessinebounds}
\left\{
\begin{array}{c}
\mathbf{b}^q_{MT}\triangleq\displaystyle{c - \sum_{i=1}^{m} |b_i| - \sum_{\substack{i,j=1,\ldots,m\\j\neq i}} |A_{ij} |+\sum_{i=1}^{m} [A_{ii}]^- -c_- - c_{\pm}}\\
\mathbf{B}^q_{MT}\triangleq\displaystyle{c +\sum_{i=1}^{m} |b_i| + \sum_{\substack{i,j=1,\ldots,m\\j\neq i}} | A_{ij} | +\sum_{i=1}^{m} [A_{ii}]^+ +c_+ + c_{\pm}}
\end{array}
\right.
\end{equation}
where for all $x\in\rr$,
$[x]^+=x$ if $x>0$ and 0 otherwise and $[x]^-=x$ if $x<0$ and 0 otherwise.

In practice, we use $\gamma_{\Q}^{MT}(q)\triangleq[\mathbf{b}^q_{MT},\mathbf{B}^q_{MT}]$ instead of $\gamma_\Q(q)$. In Sec.~\ref{sec-5}, we will present a tighter safe over-approximation of $\gamma_\Q(q)$ using SDP. 

We will need a "reverse" map to the concretization map $\gamma_{\Q}$: a map which associates to an interval a quadratic form. We call this map the \emph{abstraction map}. Note that the abstraction map produces a fresh noise symbol. 

First, we introduce some notations for intervals. 
Let $\INT$ be the set of closed bounded real intervals i.e. $\{[a,b]\mid
a,b\in\rr, a \leq b\}$ and $\bINT$ its unbounded extension, i.e. $a \in \rr \cup
\{-\infty\}, b \in \rr \cup
\{+\infty\}$. $\forall [a,b]\in \INT$, we define two functions $\operatorname{lg}([a,b]) = (b-a)/2$
and $\midop([a,b])=(b+a)/2$. 
Let $\sint$ be the classical join of $\INT$ that is $[a,b] \sint
[c,d]\triangleq [\min(a,c),\max(b,d)]$. Let $\sqcap_{\bINT}$ be the classical meet
of intervals.
 \begin{definition}[Abstraction]
  \label{def:abstractQ}
The abstraction map $\alpha_{\Q}:  \INT  \rightarrow  \Q^1$ is defined by:
 \[
 \alpha_{\Q} ([a_1,a_2]) =(c, (b)_{1},(0)_1, 0, 0, 0) \text{ where } c=\midop\left([a_1,a_2]\right) \text{ and } b= \operatorname{lg} \left([a_1,a_2]\right)\enspace .
\]
\end{definition}
\begin{property}[Concretization of abstraction]
\label{concabst}
$\gamma_{\Q}\left(\alpha_\Q\left([a_1,a_2]\right)\right)=[a_1,a_2]$.
\end{property}
\subsubsection{Arithmetic operators.}
Quadratic forms are fitted with arithmetic operators which complexity is quadratic in
the number of error terms.  We give here the definitions of the arithmetics
operators:
\begin{definition}[Arithmetics operator in $\Q$]
\label{def:arithq}
Addition, negation, multiplication by scalar are defined by:
  \[
  \begin{array}{l}
    (c,(b)_m, (A)_{m^2},c_\pm, c_+, c_-) +_\Q (c',(b')_m, (A')_{m^2},c'_\pm,
    c'_+, c'_-) =\\
    \hfill(c+c', (b+b')_m, (A+A')_{m^2}, c_\pm + c'_\pm, c_+ + c'_+,
    c_- + c'_-)\\
    -_\Q (c,(b)_m, (A)_{m^2},c_\pm, c_+, c_-) = (-c,(-b)_m, (-A)_{m^2},c_\pm,
    c_-, c_+) \\
    \lambda *_\Q (c,(b)_m, (A)_{m^2},c_\pm, c_+, c_-) = (\lambda
    c,\lambda(b)_m, \lambda(A)_{m^2},|\lambda| c_\pm, |\lambda| c_+, |\lambda| c_-)\\
  \end{array}
\]
\noindent The multiplication is more complex since it introduces additional errors.
\[
  \begin{array}{l}

    (c,(b)_m, (A)_{m^2},c_\pm, c_+, c_-) \times_\Q (c',(b')_m,
    (A')_{m^2},c'_\pm, c'_+, c'_-) = \\
    \hfill
    \left\{
      \begin{array}{l}
      (c c', c'(b)_m, + c(b')_m, c'(A)_{m^2}+ c(A')_{m^2} +
      (b)_m(b')_m^\intercal, c''_\pm, c''_+, c''_- \textrm{ with}\\
      c''_x = c''_{x_1}  + c''_{x_2} + c''_{x_3} + c''_{x_4}, \forall x \in \{+, -, \pm\}
    \end{array}
    \right.
  \end{array}
  \]
Each $c''_{x_i}$ accounts for multiplicative errors with more than quadratic
degree, obtained in the following four
sub terms:
\begin{inparaenum}[(1)]
\item $\epsilon^\intercal A \epsilon \times \epsilon^\intercal A' \epsilon$
\item $b^\intercal \epsilon \times \epsilon^\intercal A' \epsilon$ and $b'^\intercal \epsilon \times \epsilon^\intercal A \epsilon$
\item multiplication of a matrix element in $A$, $A'$ times an error term in
  $\pm, +, -$
\item multiplication between error terms or with constant $c$, $c'$.
\end{inparaenum}
Their precise definition can be found in~\cite[\textsection 3]{DBLP:journals/rc/MessineT06}.
\end{definition}

\section{Quadratic Zonotopes: a zonotopic extension of quadratic forms to environments}
\label{sec-3}






\label{sec-1-3-1-3}%
Quadratic vectors are
 the 
lift to
 environments
 of quadratic forms. They provide a
p-dimensional environment in which each dimension/variable is associated to a
quadratic form. As for the affine sets used in zonotopic
domains~\cite{DBLP:journals/corr/abs-0910-1763}, the different variables share (some) error terms, this
characterizes a set of relationships between variables, when varying the values
of $\epsilon$ within $[-1,1]^m$. The geometric interpretation of quadratic vectors
are non convex non symmetric subsets of $\mathbb{R}^p$. In the current paper, we
call them Quadratic Zonotopes to preserve the analogy with affine sets and
zonotopes. 

\begin{example}[quadratic vector]
\label{ex:quadzon}
  Let us consider the following quadratic vector $q$:
\[q = \begin{cases} 
    x = -1 + \epsilon_1 - \epsilon_2 - \epsilon_{1,1} \\
    y = 1 + 2 \epsilon_2 + \epsilon_{1,2}
  \end{cases}
\]

 Fig.~\ref{fig:conc_quad_zon} represents its associated geometric
 interpretation, a quadratic zonotope.
\end{example}

\begin{wrapfigure}{r}{.5\textwidth}
  \vspace{-1cm}
    \centering
    \resizebox{.45\textwidth}{!}{\begin{tikzpicture}
\draw [black!50!cyan,fill= white!90!cyan] (-2,0) .. controls (-0.8,-0.45) .. (-0.4,-0.7) -- (0,-2) .. controls (0.3,-1.6) .. (0,-0.7) -- (-1.75,3.6) .. controls (-1.85,3.8) .. (-2,4) -- (-1.8,3.4) .. controls (-2,2.85) .. (-4,2) -- cycle;

\draw [->] (-5,0) -- (3,0) node[anchor=north] {$x$} ;
\draw [->] (0,-3) -- (0,5) node[anchor=east] {$y$};
\foreach \x in {-4,-2,0,2} {
\draw (\x,-1pt) -- (\x,1pt)  node[anchor=north east] {\x};
}
\foreach \y in {-2,4} {
\draw (-1pt,\y) -- (1pt,\y) node[anchor=north east] {\y};
}
\end{tikzpicture}}
    \caption{Zonotopic concretization of the quadratic vector $q \in \ZQ{p}{m}$ of Ex.~\ref{ex:quadzon}: $\gamma_{\ZQ{}{}}(q)$\vspace{-1.8em}}
    \label{fig:conc_quad_zon}
  \end{wrapfigure}%
We denote by $\ZQ{p}{m}$ such quadratic vectors of dimension $p$:
$(q^p) \in \ZQ{p}{m} = \left( c^p , (b)^p_m, (A)^p_{m^2}, c^p_\pm, c^p_+, c^p_- \right) \in
\mathbb{R}^p \times \mathbb{R}^{p \times m} \times \mathbb{R}^{p \times m \times m} \times \mathbb{R}^p_+
\times \mathbb{R}^p_+ \times \mathbb{R}^p_+$.

The Zonotope domain is then a parametric relational abstract domain,
parametrized by the vector of $m$ error terms. In practice, its definition mimics a
non relational domain based on an abstraction $\ZQ{p}{m}$ of $\wp(\mathbb{R}^p)$.
Operators are \begin{inparaenum}[(i)] 
\item assignment of a variable of
the zonotope to a new value defined by an arithmetic expression, using the
semantics evaluation of expressions in $\Q$ and the substitution in the quadratic vector;
\item guard evaluation, i.e. constraint over a zonotope, using the classical
  combination of forward and backward evaluations of
  expressions~\cite[\textsection 2.4.4]{minephd}.
\end{inparaenum}



\subsubsection{Geometric interpretation and box projection.}
One can consider the geometric interpretation as the concretization of a
quadratic vector to a quadratic zonotope. 

From now on, for all $n\in\mathbb{N}$, $[n]$ denotes the set of integers $\{1,\ldots,n\}$.
 
\begin{definition}[Concretization in $\ZQ{p}{m}$]
    The concretization map $\gamma_{\ZQ{}{}}:\ZQ{p}{m}\mapsto \wp\left(\rr^p\right)$ is defined for all $q=(q_1,\ldots,q_p)\in\ZQ{p}{m}$ by:
  \[
    \gamma_{\ZQ{}{}}\left( q\right)=\left\{ x \in \mathbb{R}^p \left|
\exists\, t \in\ccm\ \st\ \forall\, k\in [p],\ x_k = q_k(t)\right\}\right. \enspace .
  \]
\end{definition}
\begin{remark}
  Characterizing explicitly such subset of $\mathbb{R}^p$ as a set of
  constraint is not easy. A classical (affine) zonotope is the image of a polyhedron (hypercube)
  by an affine map, hence it is a polyhedron and can be represented by a
  conjunction of affine inequalities. In the quadratic vectoring, such
  representation as conjunction of quadratic or at most polynomial inequalities
  is not proven to exist. This makes the concretization of a quadratic
  set difficult to compute precisely.
\end{remark}

To ease the later interpretation of computed values, we rely on a naive
projection to boxes: each quadratic form of the quadratic vector is concretized as
an interval using $\gamma_\Q$.

\subsubsection{Preorder structure.}
We can fit quadratic vectors with a preorder relying on the geometric inclusion provided by the map $\gamma_{\mathcal{Z}_{\Q}}$.


\begin{definition}[Preorder in $\ZQ{p}{m}$]
The preorder  $\sqsubseteq_{\ZQ{}{}}$ over $\ZQ{p}{m}$ is defined by:

  \[
    x \sqsubseteq_{\ZQ{}{}} y \iff \gamma_{\ZQ{}{}}(x) \subseteq
    \gamma_{\ZQ{}{}}(y)\enspace .
 \]
\end{definition}

\begin{remark}
Since $\gamma_{\ZQ{}{}}$ is not computable, $x\sqsubseteq_{\ZQ{}{}}y$ is not decidable. Note also that, from Remark~\ref{notorder}, the binary relation $\sqsubseteq_{\ZQ{}{}}$ cannot be antisymmetric and thus cannot be an order.
\end{remark}
\begin{remark}
The least upper bound of $Z \subseteq\ZQ{p}{m}$ i.e. an element $z'$ s.t.

$\left(\forall z \in Z, z \sqsubseteq_{\mathcal{Z}_{\Q}} z' \land
 \forall z'' \in \ZQ{p}{m}, \forall z \in Z, z\sqsubseteq_{\mathcal{Z}_\Q} z''\right) \implies z'\sqsubseteq_{\mathcal{Z}_\Q} z''$  
does not necessarily exists.
\end{remark}
Related
work~\cite{DBLP:journals/corr/abs-0910-1763,DBLP:conf/cav/GhorbalGP09,DBLP:conf/cav/GhorbalGP10,DBLP:journals/entcs/GoubaultGP12,DBLP:conf/sas/GoubaultPV12}
addressed this issue by providing various flavors of join operator computing a
safe upper bound or a minimal upper bound. Classical Kleene iteration scheme was
adapted to fit this loose framework without least upper bound computation. Note that, in general, the
aforementioned zonotopic domains do not rely on the geometric interpretation as
the concretization to $\wp(\mathbb{R})$.

We detail here the join operator considered in this paper. It is the lift of the
operator proposed in \cite{DBLP:journals/corr/abs-0910-1763} to quadratic
vectors. The motivation of this operator is to provide an upper bound while
minimizing the set of error terms lost in the computation. 

First we introduce a useful function $\argmin$: it cancels values of
opposite sign but provides the argument with the minimal absolute value when
provided with two values of the same sign:
\begin{definition}[Argmin] We define for all $a\in\rr$, $\operatorname{sgn}(a)=1$ if $a\geq 0$ and -1 otherwise. 
The $\argmin$ function, $\argmin: \mathbb{R} \times \mathbb{R} \rightarrow \mathbb{R}$ is defined as:
$\forall a,b \in \mathbb{R}, \argmin(a,b) = \operatorname{sgn}(a)\min(|a|,|b|)$ if $ab \geq 0$ and 
0 otherwise.
\end{definition}

 
We also need the projection map which selects a specific coordinate of a quadratic vector.
\begin{definition}[Projection] 
The projection map $\pi_k: \ZQ{p}{m} \rightarrow \Q^m$ is defined by: $\forall\, q=\left(q_1,\ldots,q_p\right) \in\ZQ{p}{m},\ \forall\, k \in [p]$, $\pi_k(q) = q_k$.
\end{definition}
When a quadratic form $q$ is defined before a new noise symbol creation, we have to extend $q$ to take into account this fresh noise symbol.
\begin{definition}[Extension] Let $i,j\in\mathbb{N}$. The extension map $\ext_{i,j}:\Q^{m} \rightarrow \Q^{i+j+m}$ is defined by: $\forall\, q=(c,(b)_m,(A)_m^2,c_{\pm},c_+,c_-)\in\Q^{m}, \ext_{i,j}(q) =(c,(b')_{i+j+m},(A')_{(i+j+m)^2},c_{\pm},c_+,c_-)\in\Q^{n}$ where $b_k'=b_{k-i}$ if $i+1\leq k\leq m+i$ and 0 otherwise and $A_{k,l}'=A_{k-i,l-i}$ if $i+1\leq k,l\leq m+i$ and 0 otherwise.
\end{definition}
\begin{property}[Extension properties]
\label{extensionprop}
Let $i,j\in\mathbb{N}$.
\begin{enumerate}
\item Let $t=(\epsilon,\epsilon_\pm,\epsilon_+,\epsilon_-)\in\ccm$ and $t'=(\epsilon',\epsilon_\pm,\epsilon_+,\epsilon_-)\in \cc^{m+i+j} \st \forall\, i+1\leq k \leq m+i$, $\epsilon'_k=\epsilon_{k-i}$. Then 
$q(\epsilon,\epsilon_{\pm},\epsilon_+,\epsilon_-)=\ext_{i,j}(q)(\epsilon',\epsilon_{\pm},\epsilon_+,\epsilon_-)$.
\item For all $q \in\Q^m$, $\gamma_\Q(q)=\gamma_\Q(\ext_{i,j}(q))$.
\end{enumerate}
\end{property}
Now, we can give a formal definition of the upper bound of two quadratic vectors.  
\begin{definition}[$\sqcup_{\ZQ{}{}}$: Upper bound computation in $\ZQ{p}{m}$]
The upper bound $\sqcup_{\ZQ{}{}}: \ZQ{p}{m} \times \ZQ{p}{m} \rightarrow \ZQ{p}{m+p}$ is defined, 
for all $q=\left(c, b, A, c_\pm, c_+, c_-\right), q'=\left(c', b', A', c'_\pm, c'_+, c'_-\right)\in\ZQ{p}{m}$ by:
\[ 
q \sqcup_{\ZQ{}{}} q' = \left(\ext_{0,p}(q_k'')\right)_{k\in [p]} + q^{e} \in \ZQ{p}{m+p}
\]
where $q''=(c'', (b'')_m^p, (A'')_{m^2}^p, c''^p_\pm, c''^p_+, c''^p_-)\in \ZQ{p}{m}$
with, for all $k\in[p]$:
\begin{itemize}
\item $(c'')_k = \midop (\gamma_\Q ( \pi_k(q)) \cup \gamma_\Q ( \pi_k(q')));$
\item $\forall\, t \in \{ \pm, +, - \}, c''_{t,k} = \argmin (c_{t,k},c'_{t,k});$
\item $\forall\, i\in [m],\ (b'')_{k,i} = \argmin (b_{k,i},b'_{k,i});$
\item $\forall\, i,j\in [m],\ (A'')_{k,i,j} = \argmin(A_{k,i, j},A'_{k,i, j});$
\end{itemize}
and $\forall\, k \in[p]$,
$q_k^{e} = \ext_{(m+k-1),(p-k)}\left(\alpha_{\Q}\left(C_k \sint C'_k\right)\right)\text{ with }C_k=\gamma_\Q ( \pi_k(q) - \pi_k(q''))\text{ and }C'_k=\gamma_\Q ( \pi_k(q') - \pi_k(q''))\enspace .$
\end{definition}

Let us denote the Minkowski sum and the Cartesian product of sets by respectively $D_1\oplus D_2=\{d_1+d_2\mid d_1\in D_1,\ d_2\in D_2\}$ and $\prod_i^n D_i=\{(d_1,\ldots,d_n)\mid\forall\, i\in [n],\ d_i\in D_i\}$. We have the nice characterization of the concretization of the upper bound given by Lemma~\ref{lemmajoin}. 
\begin{lemma}
\label{lemmajoin}
By construction of $q''$ and $q^{e}$ previously defined:
\[ 
\gamma_{\ZQ{}{}}\left(\left(\ext_{0,p}(q_k'')\right)_{k\in [p]}+q^{e}\right)=\gamma_{\ZQ{}{}}(q'')\oplus \displaystyle{\prod_{k=1}^p \gamma_{\Q^{m+p}}(q_k^{e})}
\]
\end{lemma}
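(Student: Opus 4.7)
The plan is to prove the equality of the two sets by double inclusion, exploiting the fact that the extensions $\ext_{0,p}$ and $\ext_{m+k-1,p-k}$ assign disjoint blocks of noise symbols to the two summands. Concretely, $\ext_{0,p}(q_k'')$ only involves the noise symbols $\epsilon_1,\ldots,\epsilon_m$ (the last $p$ positions have zero coefficients), while each $q_k^e=\ext_{m+k-1,p-k}(\alpha_\Q(C_k\sint C'_k))$ only involves the single fresh noise symbol $\epsilon_{m+k}$. Moreover, by Definition~\ref{def:abstractQ}, $\alpha_\Q$ produces a form whose $c_\pm,c_+,c_-$ components are all zero, so $q_k^e$ does not depend on $\epsilon_\pm,\epsilon_+,\epsilon_-$ either. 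This disjointness is the core reason the concretization splits as a Minkowski sum.

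For the inclusion $\supseteq$, I would take $x\in\gamma_{\ZQ{}{}}(q'')$ and $y=(y_1,\ldots,y_p)\in\prod_{k=1}^p\gamma_{\Q^{m+p}}(q_k^e)$. By definition, there exists $t=(\epsilon,\epsilon_\pm,\epsilon_+,\epsilon_-)\in\ccm$ with $x_k=q_k''(t)$ for all $k$, and for each $k$ an $s^{(k)}\in\cc^{m+p}$ with $y_k=q_k^e(s^{(k)})$. Since $q_k^e$ depends only on the $(m+k)$-th coordinate of its argument, I would assemble a single $t'\in\cc^{m+p}$ whose first $m$ noise-symbol coordinates agree with $\epsilon$, whose $(m+k)$-th coordinate equals $s^{(k)}_{m+k}$ for each $k$, and whose $\epsilon_\pm,\epsilon_+,\epsilon_-$ match $t$. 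Property~\ref{extensionprop}(1) then gives $\ext_{0,p}(q_k'')(t')=q_k''(t)=x_k$ and, by the noise-symbol disjointness, $q_k^e(t')=q_k^e(s^{(k)})=y_k$; summing yields $(x+y)_k=(\ext_{0,p}(q_k'')+q_k^e)(t')$, establishing membership in the left-hand side.

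For the inclusion $\subseteq$, I would take $z\in\gamma_{\ZQ{}{}}((\ext_{0,p}(q_k''))_k+q^e)$ witnessed by some $t'\in\cc^{m+p}$. Define $t\in\ccm$ by keeping the first $m$ noise-symbol coordinates of $t'$ and reusing the same $\epsilon_\pm,\epsilon_+,\epsilon_-$. Setting $x_k\triangleq q_k''(t)$ and $y_k\triangleq q_k^e(t')$, Property~\ref{extensionprop}(1) gives $\ext_{0,p}(q_k'')(t')=q_k''(t)=x_k$, hence $z_k=x_k+y_k$. Then $x\in\gamma_{\ZQ{}{}}(q'')$ by the concretization definition, and $y_k\in\gamma_{\Q^{m+p}}(q_k^e)$ since $t'\in\cc^{m+p}$; therefore $z=x+y\in\gamma_{\ZQ{}{}}(q'')\oplus\prod_{k=1}^p\gamma_{\Q^{m+p}}(q_k^e)$.

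The only delicate point, and the main thing to get right, is bookkeeping of the noise symbols: confirming that the second component $q_k^e$ really depends solely on $\epsilon_{m+k}$ (using both the structure of $\alpha_\Q$ and of $\ext_{m+k-1,p-k}$), and that it does not re-use the shared symbols $\epsilon_\pm,\epsilon_+,\epsilon_-$. Once that independence is in place, the disjoint decomposition of the parameter domain $\cc^{m+p}$ into the coordinates used by $q''$ and the fresh coordinates used by the $q_k^e$'s makes the Minkowski-sum identity immediate.
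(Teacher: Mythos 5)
Your proof is correct and follows essentially the same route as the paper's: both arguments hinge on the observation that each $q_k^e$ depends only on the fresh noise symbol $\epsilon_{m+k}$ (and, since $\alpha_\Q$ produces zero $c_\pm,c_+,c_-$ components, not on the shared special symbols), and then establish the two inclusions by assembling and splitting witnesses in $\cc^{m+p}$ exactly as the paper does via Property~\ref{extensionprop} and its Equation~(\ref{eqqerr}).
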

\begin{proof}
See Appendix.
\end{proof}
Now, we state at Theorem~\ref{thjoin} that the $\sqcup_{\ZQ{}{}}$ operator computes an upper bound of its operands with respect to the preorder $\sqsubseteq_{\ZQ{}{}}$.
\begin{theorem}[Soundness of the upper bound operator]
\label{thjoin}
For all $q,q'\in \ZQ{p}{m}$,
$q\sqsubseteq_{\ZQ{}{}} q \sqcup_{\ZQ{}{}} q'$ and $q'\sqsubseteq_{\ZQ{}{}} q \sqcup_{\ZQ{}{}} q'$.
\end{theorem}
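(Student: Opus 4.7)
\medskip
\noindent\textbf{Proof plan.}
The plan is to reduce the claim to Lemma~\ref{lemmajoin}, and then exhibit a pointwise decomposition of each $x\in\gamma_{\ZQ{}{}}(q)$ into a piece coming from $q''$ and a piece coming from the residual $q^{e}$. By symmetry it is enough to show $q\sqsubseteq_{\ZQ{}{}} q\sqcup_{\ZQ{}{}}q'$, i.e. $\gamma_{\ZQ{}{}}(q)\subseteq \gamma_{\ZQ{}{}}(q\sqcup_{\ZQ{}{}}q')$, and by Lemma~\ref{lemmajoin} this concretization is exactly $\gamma_{\ZQ{}{}}(q'')\oplus \prod_{k=1}^p \gamma_{\Q^{m+p}}(q_k^e)$. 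So the task is to write every $x\in\gamma_{\ZQ{}{}}(q)$ as a sum of an element of $\gamma_{\ZQ{}{}}(q'')$ and an element of $\prod_k \gamma_{\Q^{m+p}}(q_k^e)$.

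Concretely, I would fix $x\in \gamma_{\ZQ{}{}}(q)$ and pick $t\in\ccm$ with $x_k=q_k(t)$ for every $k\in [p]$, and split
\[
x_k \;=\; q''_k(t) \;+\; \bigl(q_k(t)-q''_k(t)\bigr)\enspace .
\]
The vector $y=(q''_1(t),\ldots,q''_p(t))$ belongs to $\gamma_{\ZQ{}{}}(q'')$ by the very definition of the concretization in $\ZQ{p}{m}$, since the same $t$ is used in every coordinate. It remains to show that the residual $z_k\triangleq q_k(t)-q''_k(t)$ lies in $\gamma_{\Q^{m+p}}(q_k^{e})$. Now $\pi_k(q)-\pi_k(q'')$ is itself a quadratic form in $\Q^m$, and $z_k$ is one of its values, so $z_k\in \gamma_{\Q}\bigl(\pi_k(q)-\pi_k(q'')\bigr)=C_k\subseteq C_k\sint C'_k$. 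By Property~\ref{concabst} applied to the interval $C_k\sint C'_k$ we get $C_k\sint C'_k=\gamma_{\Q}\bigl(\alpha_{\Q}(C_k\sint C'_k)\bigr)$, and by Property~\ref{extensionprop}.2 the extension preserves this concretization, hence $\gamma_{\Q^{m+p}}(q_k^{e})=C_k\sint C'_k$, so indeed $z_k\in \gamma_{\Q^{m+p}}(q_k^{e})$. Thus $x=y+(z_1,\ldots,z_p)\in\gamma_{\ZQ{}{}}(q'')\oplus \prod_k \gamma_{\Q^{m+p}}(q_k^{e})$, which is what we needed.

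The main point where one has to be careful is the Minkowski-sum structure: the noise symbols used to realize $y\in\gamma_{\ZQ{}{}}(q'')$ and those used to realize each $z_k\in\gamma_{\Q^{m+p}}(q_k^e)$ must be independent, because the Cartesian product on the right of Lemma~\ref{lemmajoin} allows a different choice of inputs for every coordinate $k$. This is precisely why $q^{e}$ is built using $\ext_{(m+k-1),(p-k)}\circ \alpha_{\Q}$, placing each $q_k^{e}$ on its own fresh noise symbol disjoint from the original $m$ symbols and from those of $q_{k'}^{e}$ for $k'\neq k$; the shifts in the extension guarantee disjointness of supports and hence independence, so the pointwise decomposition above is legitimate. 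Once this bookkeeping is made explicit, the argument for $q'\sqsubseteq_{\ZQ{}{}}q\sqcup_{\ZQ{}{}}q'$ is the same with $C'_k$ in place of $C_k$, using $C'_k\subseteq C_k\sint C'_k$. The hardest part is really contained in Lemma~\ref{lemmajoin} itself; once it is taken as given, Theorem~\ref{thjoin} follows with only this coordinate-wise decomposition and the two concretization lemmas.
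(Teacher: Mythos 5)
Your proposal is correct and follows essentially the same route as the paper's own proof: both reduce to Lemma~\ref{lemmajoin}, choose the \emph{same} input $t$ for $q''$ as for $q$ (the paper's $t''=t$), and show the residual $q_k(t)-q''_k(t)$ lies in $\gamma_\Q(\pi_k(q)-\pi_k(q''))=C_k\subseteq C_k\sint C'_k=\gamma_{\Q^{m+p}}(q_k^e)$ via Property~\ref{concabst} and Property~\ref{extensionprop}. The paper phrases the last step through the explicit bounds $\min(\mathbf{b}^{r_k},\mathbf{b}^{r'_k})\leq q_k(t)-q''_k(t)\leq\max(\mathbf{B}^{r_k},\mathbf{B}^{r'_k})$, but this is the same inclusion you state set-theoretically.
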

\begin{proof}
See Appendix.
\end{proof}


\begin{example}
Let $Q$ and $Q'$ be two quadratic vectors:
\[Q = \begin{cases}
  x = -1 + \epsilon_1 - \epsilon_2 - \epsilon_{1,1} \\
  y = 1 + 2 \epsilon_2 + \epsilon_{1,2}
\end{cases} \hfill
Q'= \begin{cases}
  x = - 2 \epsilon_2 - \epsilon_{1,1} + \epsilon_+ \\
  y = 1 + \epsilon_1 + \epsilon_2 + \epsilon_{1,2}
\end{cases}
\]
The resulted quadratic vector $Q'' = Q \sqcup_{\ZQ{}{}} Q'$ is
        \[Q'' = \begin{cases}
        x = - \epsilon_2 - \epsilon_{1,1} + 2 \epsilon_3\\
        y = 1 + \epsilon_2 + \epsilon_{1,2} + \epsilon_4
      \end{cases}\]
  
\end{example}

 \begin{figure}
    \centering
    \begin{subfigure}{.55\textwidth}
      \resizebox{!}{5cm}{\begin{tikzpicture}

\draw [black!50!green,fill= white!90!green] (-3,0) -- (-4,0) -- (-4,4) -- (0,4) -- (3,1) -- (3,-1) .. controls (3,-1) and (3,-1.5) .. (2,-2) -- (-2,-2) -- cycle node[anchor=north west] {Q"}; 

\draw [black!50!cyan,fill= white!90!cyan, dashed] (-2,0) .. controls (-0.8,-0.45) .. (-0.4,-0.7) -- (0,-2) .. controls (0.3,-1.6) .. (0,-0.7) -- (-1.75,3.6) .. controls (-1.85,3.8) .. (-2,4) -- (-1.8,3.4) .. controls (-2,2.85) .. (-4,2) -- cycle node[anchor=north east] {Q};

\draw [black!50!red,fill= white!90!red, dashed] (-4,4) -- (-3,4) -- (3,0) -- (-4,0) .. controls (-2,1) and (-2,2) .. (-2,2) -- cycle node[anchor=north west] {Q'}; 

\draw [black!50!cyan, dashed] (-2,0) .. controls (-0.8,-0.45) .. (-0.4,-0.7) -- (0,-2) .. controls (0.3,-1.6) .. (0,-0.7) -- (-1.75,3.6) .. controls (-1.85,3.8) .. (-2,4) -- (-1.8,3.4) .. controls (-2,2.85) .. (-4,2) -- cycle;

\draw [->] (-5,0) -- (4,0) node[anchor=north] {$x$} ;
\draw [->] (0,-3) -- (0,5) node[anchor=east] {$y$};
\foreach \x in {-4,-2,0,2,3} {
\draw (\x,-1pt) -- (\x,1pt)  node[anchor=north east] {\x};
}
\foreach \y in {-2,4} {
\draw (-1pt,\y) -- (1pt,\y) node[anchor=north east] {\y};
}
\end{tikzpicture}}
    \caption{Upper bound computation}
    \end{subfigure}%
    \begin{subfigure}{.45\textwidth}
      \resizebox{!}{5cm}{\begin{tikzpicture}
\draw [black!50!cyan,fill= white!90!cyan] (-2,0) .. controls (-0.8,-0.45) .. (-0.4,-0.7) -- (0,-2) .. controls (0.3,-1.6) .. (0,-0.7) -- (-1.75,3.6) .. controls (-1.85,3.8) .. (-2,4) -- (-1.8,3.4) .. controls (-2,2.85) .. (-4,2) -- cycle  node[anchor=south east] {Q};

\draw [black!50!green,fill = white!90!green] (0.25,-2) -- (-1,-2) -- (-1,4) -- (0.25,4) -- cycle node[anchor=south west] {Q'};

\draw [red, very thick, dashed] (-1,-3) -- (-1,5) node[anchor=north east] {Guard};

\draw [black!50!cyan, dashed] (-2,0) .. controls (-0.8,-0.45) .. (-0.4,-0.7) -- (0,-2) .. controls (0.3,-1.6) .. (0,-0.7) -- (-1.75,3.6) .. controls (-1.85,3.8) .. (-2,4) -- (-1.8,3.4) .. controls (-2,2.85) .. (-4,2) -- cycle ;

\draw [->] (-5,0) -- (3,0) node[anchor=north] {$x$} ;
\draw [->] (0,-3) -- (0,5) node[anchor=east] {$y$};
\foreach \x in {-4,-2,0,2} {
\draw (\x,-1pt) -- (\x,1pt)  node[anchor=north east] {\x};
}
\foreach \y in {-2,4} {
\draw (-1pt,\y) -- (1pt,\y) node[anchor=north east] {\y};
}
\end{tikzpicture}}
    \caption{Guard evaluation}
    \end{subfigure}
      
    \caption{Zonotopic concretization of operations on Quadratic Zonotopes}
    \label{fig:joinmeet_quad_zon}
  \end{figure}

\subsubsection{Transfer functions.}




The two operators \texttt{guard} and \texttt{assign} over
the expressions $RelExpr$ and $Expr$ are defined like in a non relational
abstract domain, as described
in~\cite[\textsection 2.4.4]{minephd}. Each operator relies on the forward
semantics of numerical expressions, computed within arithmetics operators in $\Q$:

\begin{definition}[Semantics of expressions]
Let $\mathcal{\Var}$ be a finite set of variables. We denote by $\llbracket\cdot \rrbracket_\Q (\Var \rightarrow \Q) \rightarrow
  \Q$ the semantics evaluation of an expression in an environment mapping
  variables to quadratic forms.
  \[
  \begin{array}{rcl}
    \llbracket v \rrbracket_\Q(Env) &=& \pi_k(Env) \textrm{ where } k \in [p] \text{ is
      the index of } v \in \Var \text{ in } Env\\
    \llbracket e_1 \; \operatorname{bop} \; e_2 \rrbracket_\Q(Env) &=& \llbracket e_1  \rrbracket_\Q(Env) \;\;
    \operatorname{bop}_\Q \;\; \llbracket e_2 \rrbracket_\Q(Env) \\ 
    \llbracket \operatorname{uop} \; e \rrbracket_\Q(Env) &=& \operatorname{uop}_\Q \llbracket e  \rrbracket_\Q(Env)  
  \end{array}
  \]
\end{definition}

Guards, i.e. tests, are enforced through the classical combination of forward and
backward operators. Backward operators are the usual fallback operators,
e.g. $\llbracket x + y \rrbracket^\leftarrow = \left(x \sqcap_\Q (\llbracket x + y
\rrbracket -_\Q y), y \sqcap_\Q (\llbracket x + y
\rrbracket -_\Q x) \right)$ where $\sqcap_\Q$ denotes the meet of quadratic
forms. As for upper bound computation, no best lower bound exists and such meet
operator in $\Q$ has to compute a safe but
imprecise upper bound of maximal lower bounds. 

The meet over $\Q^m$ works as follows: it projects each argument to
intervals using $\gamma_\Q$, performs the meet computation and reinject with a fresh noise symbolthe resulting closed bounded interval to
$\Q$ thank to $\alpha_\Q$. Whereas, the meet over $\ZQ{p}{m}$ is the lift to the meet over $\Q^m$ to quadratic vectors. Formally:
\begin{definition}[$\sqcap_{\Q}, \sqcap_{\ZQ{}{}}$: Approximations of maximal lower bounds]
The meet $\sqcap_{\Q}: \Q^{m} \times \Q^{m} \rightarrow \Q^{1}$ is defined by:
\[
\forall x,y \in \Q^m, x \sqcap_\Q y \triangleq \alpha_\Q \left( \gamma_\Q (x)
\sqcap_{\INT} \gamma_\Q (y) \right).
\] 

The meet $\sqcap_{\ZQ{}{}}: \ZQ{p}{m} \times \ZQ{p}{m} \rightarrow \ZQ{p}{p}$ is defined,
for all $x,y \in \ZQ{p}{m}$ by $z= x \sqcap_{\ZQ{}{}} y \in \ZQ{p}{p}$ where:
\[
\forall i \in [p], z_i = \pi_i (x) \sqcap_{\Q} \pi_i (y) \textrm{ when } \pi_i
(x) \neq \pi_i (y), \pi_i (x) \textrm{ otherwise}.
\]
\end{definition}



\begin{example}
Let $Q$ be the following quadratic vector. The meet with the constraint $x + 1 \geq 0$ produces
the resulting quadratic vector $Q'$:
\[Q = \begin{cases}
  x = -1 + \epsilon_1 - \epsilon_2 - \epsilon_{1,1} \\
  y = 1 + 2 \epsilon_2 + \epsilon_{1,2}
\end{cases}
Q' = \begin{cases}
        x = - \frac{3}{8} + \frac{5}{8} \epsilon_3 \\
        y = 1 + 2 \epsilon_2 + \epsilon_{1,2}
      \end{cases}
\]

\end{example}

\begin{proof}
  $Guard(Q, x + 1 \geq 0) = Q \sqcap_{\ZQ{}{}} \left(  \alpha_\Q \left(
    \gamma_\Q (x +_\Q 1) \sqcap_{\INT} [0, +\infty [  \right) -_\Q 1 \right)$.
We use the more precise concretization over-approximation map $\gamma_\Q^{SDP}$ that will be
introduced in Sec.~\ref{sec-5}: $\gamma_\Q^{SDP} (\epsilon_1 - \epsilon_2 -
\epsilon_{1,1}) = [-3, 1.25]$. We focus on $x$ since the meet is performed
component-wise and $\alpha_\Q \left(
    \gamma_\Q (\epsilon_1 - \epsilon_2 - \epsilon_{1,1}) \sqcap_{\bINT} [0,
    +\infty [  \right) -_\Q 1 =   \alpha_\Q \left(
    [-3, 1.25] \sqcap_{\bINT} [0, +\infty [  \right) -_\Q 1 = 
  \alpha_\Q \left( [0, 1.25] \right) -_\Q 1 
=
  (5/8 + 5/8 \epsilon_3) -_\Q 1 
=  -3/8 + 5/8 \epsilon_3$ where $\epsilon_3$ is a fresh error term introduced
by $\alpha_\Q$.

\end{proof}

\section{Floating point computations}
\label{sec:floats}

All the operators presentation above assumed a real semantics. As usual when
analyzing programs, the domain has to be adapted to deal with floating point
arithmetics.

We recall that our use of quadratic zonotopes is to precisely over-approximate
reachable values as set of reals. We relied on the approach proposed by Stolfi
and De Figueiro~\cite{stolfi}, creating a new error term for each
operation. Other approaches such as generalized intervals~\cite{hansen} are
typically used in Fluctuat~\cite{DBLP:conf/sas/Goubault13}. Their definition in the quadratic setting is
given in~\cite{DBLP:journals/rc/MessineT06}. However, according to~\cite{stolfi}
the approach with error terms instead of interval arithmetics is more precise
but can generate an important number of error terms. 

In this specific case of quadratic forms, the term in $\epsilon_\pm$ is used to
accumulate floating point errors: the number of error terms does not evolve
due to floating point computation. The extension to zonotopes is direct since
numerical operations are evaluated at form level.

We illustrate the extension to quadratic form of~\cite{stolfi}:

\paragraph{Addition.}
According to Knuth~\cite[\S 4.2.2]{knuth2014art} algorithm, the exact computation of $u + v$ is $u +
v + e$ where $e = (u - ((u+v) - v)) + (v - ((u+v) - u))$ with all operations
performed in floating point arithmetics. Let $e^+(u,v)$ be such additive error $e$.

We consider the addition of two quadratic forms $x=(x_0, (x_i), (x_{ij}), x_\pm, x_+, x_-)$ and $y=(y_0, (y_i), (y_{ij}), y_\pm, y_+, y_-)$.
%
The addition of $x$ and $y$ is modified to considered these generated errors:
\[
\begin{array}{l}
(x_0,(x_i), (x_{ij}),x_\pm, x_+, x_-) +_\Q (y_0, (y_i), (y_{ij}), y_\pm, y_+, y_-) =\\
    \hfill(x_0+y_0, (x_i+y_i), (x_{ij}+y_{ij}), x_\pm + y_\pm + rounded\_err, x_+ + y_+, x_- + y_-)\\
  \end{array}
\]
where
\begin{itemize}
\item $err =  \displaystyle{\sum_{\substack{i,j=1}}^n} e^+(x_{ij},y_{ij}) +
  \displaystyle{\sum_{i=0}^n} e^+(x_i,y_i) + e^+(x_\pm,y_\pm) + e^+(x_+,y_+) + e^+(x_-,y_-)$.
\item $rup$ denotes the rounding up;
\item $rounded\_err = max(|rup(err)|, |- rup(-err)|)$
\end{itemize}

\paragraph{External multiplication.}

Similarly, the algorithm of Dekker and Veltkamp characterizes the multiplicative error
obtained when computing $u \times v$. It relies on a constant $C$ depending on
the precision used. For single precision floats, $C = 2^{27} +1$. We denote  by
$e^\times(u,v)$ such multiplicative error and refer the interested reader to
Dekker's paper~\cite{dekker1971floating}.

The operator $*_\Q$ is modified to account such multiplicative errors:
\[
  \lambda *_\Q (x_0,(x_i), (x_{ij}),x_\pm, x_+, x_-) = (\lambda
    x_0,\lambda(x_i), \lambda(x_{ij}),|\lambda| x_\pm + r\_err , |\lambda| x_+, |\lambda| x_-)
\]
where \begin{itemize}
\item $err = \sum\limits_{i=1}^{n} e^\times(\lambda, x_i) +
  \sum\limits_{i,j=1}^{n} e^\times(\lambda, x_{ij}) +
  e^\times(\lambda, x_{\pm}) + e^\times(\lambda, x_-) + e^\times(\lambda, x_+)$.
\item $r\_err = max(|rup(err)|, |- rup(-err)|)$.



\end{itemize}

All other operators behave similarly: each operation computing an addition or a
product generates an additive and a multiplicative error, respectively, accumulated
in the $x_\pm$ term.


\section{Improving concretization using SDP}
\label{sec-5}
In this part, we propose a method based on semi-definite programming to compute an over-approximation of the interval concretization of a quadratic form. This method provides tighter bounds than $\mathbf{b}^q_{MT}$ and $\mathbf{B}^q_{MT}$ defined at Equation~\eqref{touamimessinebounds}.

Let consider a quadratic form $q=(c^q,(b^q)_m,(A^q)_m,c_{\pm}^q,c_+^q,c_-^q)\in \Q^m$.
Recall that $\ccm=[-1,1]^m\times [-1,1]\times [0,1]\times [-1,0]$, we remind that the concretization of $q$ is the interval defined $[\mathbf{b}^q,\mathbf{B}^q]$ where $\mathbf{b}^q=\inf\{q(x)\mid x\in \ccm\}$ and $\mathbf{B}^q= \sup\{q(x)\mid x\in \ccm\}$. 

In general, a standard quadratic form $r$ from $\rr^{m+3}$ to $\rr$ is defined by $x\mapsto r(x)=x^\intercal A^r x+{b^r}^\intercal x+c^r$
with a $(m+3)\times (m+3)$ symmetric matrix $A^r$, a vector of $\rr^{m+3}$, $b^r$ and a scalar $c^r$. 
We can cast $q$ into a standard quadratic form $r_q$, leading to $r_q(x)=q(x)$ for all $x\in\ccm$. Indeed, it suffices to take the following data :
\[
A^{r_q}=\left(\begin{array}{cc} \tilde{A} & 0_{m\times 3} \\ \multicolumn{2}{c}{0_{3\times (m+3)}} \end{array}\right) \text{ with }\tilde{A}=\dfrac{A^q+{A^q}^\intercal}{2},\ {b^{r_q}}^\intercal =\left({b^q}^\intercal,c_{\pm}^q,c_+^q,c_-^q\right) \text{ and } c^{r_q}=c^q
\] 
Let us denote by $\tr$, the trace function which associates to a matrix the sum of its diagonal elements and let $x\in\rr^{m+3}$.
A simple calculus yields to: 
\[
\label{eq_trace}
r_q(x)=\tr\left(\Mat^{r_q} X\right)\text{ where } \Mat^{r_q}=\begin{pmatrix} A^{r_q} & & \frac{1}{2}b^{r_q}\\ \frac{1}{2}{b^{r_q}}^\intercal & & c^{r_q}\end{pmatrix}\text{ and } X=\ux\uxt\enspace .
\]
To only deal with matrices, we have to translate the constraints on the vector $x$ into constraints on the matrix
$X$. Let us introduce the set $\mathcal{C}^m$ of $(m+4)\times (m+4)$ symmetric
matrices $Y$ such that:

\begin{center}
\begin{tabular}{p{.58\textwidth}|p{.4\textwidth}}
\begin{minipage}{0.55\textwidth}
\begin{subequations}\label{X_constraint}
\vspace{-1em}
\begin{gather}
\forall\, i,j\in [m+3],\ i< j,\ Y_{i,j}\in\mathrm[-1,1\mathrm]\label{eq_cont:1} \\
\forall\, i\in [m+1],\ Y_{i,(m+4)}\in\mathrm[-1,1\mathrm]\label{eq_cont:2}\\
\forall\, i\in [m+3],\ Y_{i,i}\in\mathrm[0,1\mathrm]\label{eq_cont:3}
\end{gather}
\end{subequations}
\end{minipage}
&
\begin{minipage}{0.4\textwidth}
\addtocounter{equation}{-1}
\begin{subequations}\label{Y_constraint}
\addtocounter{equation}{3}
\vspace{-1em}
\begin{gather}
Y_{(m+2),(m+3)}\in\mathrm[-1,0\mathrm]\label{eq_cont:6}\\
Y_{(m+2),(m+4)}\in\mathrm[0,1\mathrm]\label{eq_cont:4}\\
Y_{(m+3),(m+4)}\in\mathrm[-1,0\mathrm]\label{eq_cont:5}\\  
Y_{(m+4),(m+4)}=1\label{eq_cont:6}
\end{gather}
\end{subequations}
\end{minipage}
\end{tabular}
\end{center}

Note by symmetry of $Y$, for all $i,j\in [m+3],\ i< j$, $Y_{j,i}\in [-1,1]$; for all $i\in [m+1],\ Y_{(m+4),i}\in [-1,1]$;
$Y_{(m+4),(m+3)}\in\mathrm[-1,0\mathrm]$ and $Y_{(m+4),(m+2)}\in\mathrm[0,1\mathrm]$.

\[
Y = [-1, 1] \cap 
%
\left(
\setlength\unitlength{13pt}
\begin{picture}(7,3)(0,-3.5)
\put(1,-1){\makebox(0,0){$+$}}
\put(2,-1.6){\makebox(0,0){$\ddots$}}
\put(3,-2.4){\makebox(0,0){$\ddots$}}
\put(1.8,-4.4){\makebox(0,0){$\top$}}
\put(5,-1.8){\makebox(0,0){$\top$}}
\put(4,-3.4){\makebox(0,0){$+$}}
\put(4,-4.4){\makebox(0,0){$-$}}
\put(4,-5.4){\makebox(0,0){$+$}}
\put(5,-3.4){\makebox(0,0){$-$}}
\put(5,-4.4){\makebox(0,0){$+$}}
\put(5,-5.4){\makebox(0,0){$-$}}
\put(6,-3.4){\makebox(0,0){$+$}}
\put(6,-4.4){\makebox(0,0){$-$}}
\put(6,-5.4){\makebox(0,0){$\{1\}$}}
\put(4,-6.5){\begin{rotate}{-45}\small$m+2$\end{rotate}}
\put(5,-6.5){\begin{rotate}{-45}\small$m+3$\end{rotate}}
\put(6,-6.5){\begin{rotate}{-45}\small$m+4$\end{rotate}}
\put(8,-3.4){\begin{rotate}{0}\small$m+2$\end{rotate}}
\put(8,-4.4){\begin{rotate}{0}\small$m+3$\end{rotate}}
\put(8,-5.4){\begin{rotate}{0}\small$m+4$\end{rotate}}
\put(4.5,-3){\line(0,-1){3}}
\put(5.5,-3){\line(0,-1){3}}
\put(3.5,-4){\line(1,0){3}}
\put(3.5,-4.9){\line(1,0){3}}
\end{picture}
\right)
\]
\vspace{1em}

We denote by $\mathbb{S}_{n}^+$ the set of semi-definite positive matrices of size $n\times n$ i.e. the $n\times n$ symmetric matrices $M$ such that for all $y\in\rr^{n}$, $y^\intercal M y\geq 0$. We recall that the rank of a matrix is the number of linearly independent rows (or columns). We denote by $\operatorname{rk}(M)$, the rank of the matrix $M$.

\begin{lemma}[Constraint translation]
\label{translation}
The following statement holds:
\[
\left\{X\in \mathbb{S}_{m+4}^+\mid\operatorname{rk}(X)=1,\ X\in \CCm\right\}=\left\{X\in\mathbb{S}_{m+4}^+\mid \exists\, x\in \ccm\ \st\ X=\ux\uxt\right\}\enspace .
\] 
\end{lemma}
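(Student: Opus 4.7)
My strategy is to prove the two inclusions separately, exploiting the standard factorization of rank-1 positive semi-definite matrices.

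For the inclusion $\supseteq$, I would start with $x \in \ccm$ and set $X = \ux\uxt$. Since $X$ is an outer product of a nonzero vector (its last coordinate is $1$) with itself, it is automatically symmetric, positive semi-definite, and of rank exactly $1$. It then remains to verify all the interval constraints defining $\CCm$. Since $\ux_i = x_i$ for $i \in [m+3]$ and $\ux_{m+4}=1$, each entry $X_{i,j}$ is a product of two coordinates of $\ux$. The verification is then a case analysis following the block structure of $\ccm$: for $i \in [m+1]$ the coordinate $x_i$ lives in $[-1,1]$; we have $x_{m+2} \in [0,1]$ and $x_{m+3} \in [-1,0]$; and $\ux_{m+4}=1$. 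The six constraints \eqref{eq_cont:1}--\eqref{eq_cont:6} each follow from computing the appropriate product of intervals, e.g.\ $X_{m+2,m+3}=x_{m+2}x_{m+3} \in [0,1]\cdot[-1,0]\subseteq [-1,0]$, $X_{m+2,m+4}=x_{m+2}\in [0,1]$, and $X_{i,i}=x_i^2 \in [0,1]$ in every case.

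For the inclusion $\subseteq$, I would take $X \in \mathbb{S}_{m+4}^+$ of rank $1$ satisfying the constraints of $\CCm$. The key algebraic fact is that any rank-$1$ positive semi-definite matrix admits a factorization $X = y y^\intercal$ for some $y \in \rr^{m+4}$ (obtained from its spectral decomposition: the unique nonzero eigenvalue is positive and one absorbs its square root into the corresponding eigenvector). The constraint $X_{m+4,m+4}=1$ forces $y_{m+4}^2 = 1$, hence $y_{m+4} = \pm 1$; since $-y$ defines the same matrix $yy^\intercal$, I may replace $y$ by $-y$ if necessary to ensure $y_{m+4}=1$. Writing $y = \ux$ for some $x \in \rr^{m+3}$, the entries $X_{i,m+4}=x_i$ then let me read off directly the ranges: constraint \eqref{eq_cont:2} gives $x_i \in [-1,1]$ for $i \in [m+1]$, \eqref{eq_cont:4} gives $x_{m+2} \in [0,1]$, and \eqref{eq_cont:5} gives $x_{m+3} \in [-1,0]$. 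This is exactly $x \in \ccm$, and by construction $X = \ux\uxt$.

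The two inclusions together yield the claimed equality. There is no real obstacle; the only mild subtlety is making the sign choice in the rank-one factorization explicit via the pivot entry $X_{m+4,m+4}=1$, which turns the ambiguity $y \leftrightarrow -y$ into a canonical choice $y_{m+4}=1$. All other steps reduce to a direct identification between coordinates of $x$ and last-column entries of $X$, together with termwise interval arithmetic in the $\supseteq$ direction.
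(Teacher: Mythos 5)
Your proof is correct and follows essentially the same route as the paper's: both directions rest on the rank-one factorization $X=yy^\intercal$ of a PSD rank-one matrix, the sign normalization $y_{m+4}=1$ forced by the constraint $X_{m+4,m+4}=1$ (up to replacing $y$ by $-y$), reading the coordinates of $x$ off the last column of $X$, and termwise interval products for the converse inclusion. If anything, your citation of the constraints in the $\subseteq$ direction is slightly more careful than the paper's, which invokes the diagonal constraint where the last-column one is the natural reference; no further comment is needed.
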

\begin{proof}
See Appendix.
\end{proof}

Lemma~\ref{translation} allows to conclude that optimizing $r_q$ over $\ccm$ and optimizing $X\mapsto \tr(\Mat^{r_q} X)$ over 
$\{X\in \mathbb{S}_{m+4}^+\mid \operatorname{rk}(x)=1,\ X\in \CCm \}$ is the same. However, the rank one constraint on $X$ leads to a non-convex problem which makes it difficult to solve. A natural and a commonly used relaxation is to remove the rank constraint to get a linear problem over semi-definite positive matrices. This discussion is formulated as Proposition~\ref{coro}.
\begin{proposition}
\label{coro}
The interval bounds of the concretization of $q$ can be computed from the two following non-convex semi-definite programs:
\[
\begin{array}{llccr}
\mathbf{b}^q&=&\inf & & tr(\Mat^{r_q} X)\\
& & & \st& \left\{
\begin{array}{r}
X\in\CCm\\ 
X\in \mathbb{S}_{m+4}^+\\
\operatorname{rk}(X)=1
\end{array}
\right.
\end{array}
\qquad \text{ and } \qquad  
\begin{array}{llccr}
\mathbf{B}^q&=&\sup&  & tr(\Mat^{r_q} X)\\
& & & \st&\left\{
\begin{array}{r}
 X\in\CCm\\
 X\in\mathbb{S}_{m+4}^+\\
\operatorname{rk}(X)=1
\end{array}
\right.
\end{array}
\] 
By removing the rank constraint:
\[
\begin{array}{llccrrr}
\mathbf{b}^q_{SDP}&=&\inf & & tr(\Mat^{r_q} X)&\leq& \mathbf{b}^q\\
& & & \st& \left\{
\begin{array}{r}
X\in\CCm\\ 
X\in \mathbb{S}_{m+4}^+
\end{array}
\right.
\end{array}
\text{ and }  
\begin{array}{llccrrr}
\mathbf{B}^q_{SDP}&=&\sup&  & tr(\Mat^{r_q} X)&\geq& \mathbf{B}^q\\
& & & \st&\left\{
\begin{array}{r}
 X\in\CCm\\
 X\in\mathbb{S}_{m+4}^+
\end{array}
\right.
\end{array}
\]
\end{proposition}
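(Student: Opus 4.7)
The plan is to show that the exact bounds $\mathbf{b}^q$ and $\mathbf{B}^q$ from Equation~\eqref{exactbounds} coincide with the optima of the rank-constrained semi-definite programs, and then to observe that removing the rank constraint enlarges the feasible region, which gives the claimed inequalities in the relaxed case.

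First I would invoke the construction of $r_q$ given just before the statement: by the choice of $A^{r_q}$, $b^{r_q}$ and $c^{r_q}$, one has $r_q(x) = q(x)$ for every $x \in \ccm$ (the symmetrization $\tilde A = (A^q + {A^q}^\intercal)/2$ leaves the quadratic form unchanged, and the extra coordinates of $b^{r_q}$ absorb the $\epsilon_\pm,\epsilon_+,\epsilon_-$ linear terms). Consequently $\inf_{x\in\ccm} q(x) = \inf_{x\in\ccm} r_q(x)$ and the analogous identity holds for the supremum.

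Second, I would combine this with the trace identity $r_q(x) = \tr(\Mat^{r_q} X)$ for $X = \ux\uxt$ and apply Lemma~\ref{translation}, which identifies the image of $x \mapsto \ux\uxt$ on $\ccm$ with the set of rank-one PSD matrices satisfying the $\CCm$ constraints. Therefore, as $X$ ranges over $\{X \in \mathbb{S}_{m+4}^+ \mid \operatorname{rk}(X) = 1,\ X \in \CCm\}$, the objective $\tr(\Mat^{r_q} X)$ attains exactly the same values as $q$ attains on $\ccm$, yielding $\mathbf{b}^q$ and $\mathbf{B}^q$ as the two rank-constrained optima.

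Finally, I would note the obvious inclusion
\[
\{X\in\mathbb{S}_{m+4}^+\mid \operatorname{rk}(X)=1,\ X\in\CCm\} \ \subseteq\ \{X\in\mathbb{S}_{m+4}^+\mid X\in\CCm\}.
\]
Enlarging the feasible set of a minimization program can only decrease the optimum, and enlarging that of a maximization program can only increase it, so $\mathbf{b}^q_{SDP}\leq \mathbf{b}^q$ and $\mathbf{B}^q_{SDP}\geq \mathbf{B}^q$ follow immediately. The only non-routine ingredient is Lemma~\ref{translation}, whose proof is deferred to the appendix; once that translation is in hand, the rest of the argument is a direct chain of equalities plus a monotonicity-under-relaxation remark, so no real obstacle is expected here.
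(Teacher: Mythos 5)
Your proposal is correct and matches the paper's own justification, which likewise derives the rank-constrained equalities from the identity $r_q=q$ on $\ccm$, the trace identity, and Lemma~\ref{translation}, and then obtains the inequalities by the standard observation that dropping the rank constraint enlarges the feasible set. The paper presents this as a short discussion preceding the proposition rather than a separate proof, so there is nothing to add.
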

Finally, the interval bounds of the concretization are safely approximated by using $\mathbf{b}^q_{SDP}$ and $\mathbf{B}^q_{SDP}$
and we write $\gamma_\Q^{SDP}(q)\triangleq [\mathbf{b}^q_{SDP},\mathbf{B}^q_{SDP}]$. Moreover, those bounds improve the ones provides by~\cite{DBLP:journals/rc/MessineT06}. 
\begin{theorem}[Bounds improvements]
\label{improvboundsth}
Let $q\in\Q^m$. The following inequalities hold:
\[
\gamma_\Q(q)\subseteq\gamma_\Q^{SDP}(q)\subseteq \gamma_\Q^{MT}(q)\text{ i.e. }\ \mathbf{b}^q_{MT}\leq \mathbf{b}^q_{SDP}\leq \mathbf{b}^q\ \land\ \mathbf{B}^q\leq \mathbf{B}^q_{SDP}\leq \mathbf{B}^q_{MT}\enspace .
\]
\end{theorem}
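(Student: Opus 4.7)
The plan is to treat the two inclusions separately.

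The first inclusion $\gamma_\Q(q)\subseteq\gamma_\Q^{SDP}(q)$, i.e.\ $\mathbf{b}^q_{SDP}\le\mathbf{b}^q$ and $\mathbf{B}^q\le\mathbf{B}^q_{SDP}$, is the relaxation argument already behind Proposition~\ref{coro}. For every $x\in\ccm$, Lemma~\ref{translation} shows that $X=\ux\uxt$ lies in $\CCm\cap\mathbb{S}_{m+4}^+$, and the identity $q(x)=\tr(\Mat^{r_q}X)$ noted before Proposition~\ref{coro} shows that this value is attained at a feasible point of both relaxed SDPs; hence $\mathbf{b}^q_{SDP}\le q(x)\le\mathbf{B}^q_{SDP}$. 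Taking infimum and supremum over $x\in\ccm$ yields the inclusion.

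For the second inclusion $\gamma_\Q^{SDP}(q)\subseteq\gamma_\Q^{MT}(q)$ the idea is to show that $\mathbf{B}^q_{MT}$ and $\mathbf{b}^q_{MT}$ already bound $\tr(\Mat^{r_q}X)$ for every $X\in\CCm$, without invoking positive semi-definiteness at all. Using the block structure of $\Mat^{r_q}$ and the normalization $X_{m+4,m+4}=1$, the trace expands as
\[
\tr(\Mat^{r_q}X)=\sum_{i,j=1}^{m}\tilde A_{ij}X_{ij}+\sum_{i=1}^{m}b^q_i X_{i,m+4}+c_\pm^q X_{m+1,m+4}+c_+^q X_{m+2,m+4}+c_-^q X_{m+3,m+4}+c^q,
\]
with $\tilde A=(A^q+(A^q)^\intercal)/2$. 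By symmetry of $X$ the quadratic part rewrites as $\sum_i A^q_{ii}X_{ii}+\sum_{i<j}(A^q_{ij}+A^q_{ji})X_{ij}$, and I bound each summand termwise against the constraints~\eqref{X_constraint}--\eqref{Y_constraint}: $X_{ii}\in[0,1]$ gives $[A^q_{ii}]^-\le A^q_{ii}X_{ii}\le[A^q_{ii}]^+$; $X_{ij}\in[-1,1]$ combined with $|A^q_{ij}+A^q_{ji}|\le|A^q_{ij}|+|A^q_{ji}|$ yields $\bigl|\sum_{i<j}(A^q_{ij}+A^q_{ji})X_{ij}\bigr|\le\sum_{i\neq j}|A^q_{ij}|$; the linear $b^q_i$ terms are bounded by $\pm|b^q_i|$; and the non-negativity of $c_\pm^q,c_+^q,c_-^q$ combined with the asymmetric intervals of the last column gives $c_\pm^q X_{m+1,m+4}\in[-c_\pm^q,c_\pm^q]$, $c_+^q X_{m+2,m+4}\in[0,c_+^q]$, $c_-^q X_{m+3,m+4}\in[-c_-^q,0]$. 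Summing the upper estimates reproduces exactly $\mathbf{B}^q_{MT}$, and summing the lower ones reproduces $\mathbf{b}^q_{MT}$.

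The only real obstacle is the index bookkeeping: matching the positions $m+1,m+2,m+3,m+4$ of $X$ to $\epsilon_\pm$, $\epsilon_+$, $\epsilon_-$ and the affine homogenization constant, and pairing each asymmetric interval of~\eqref{Y_constraint} with the correctly-signed coefficient. Once that is set, the inequality is termwise; positive semi-definiteness of $X$ is not used, so in fact $\mathbf{B}^q_{MT}$ and $\mathbf{b}^q_{MT}$ are bounds of the linear relaxation over $\CCm$ alone.
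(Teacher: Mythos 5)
Your proposal is correct and follows essentially the same route as the paper: the first inclusion is the relaxation argument already contained in Proposition~\ref{coro} (every $\ux\uxt$ with $x\in\ccm$ is feasible for the relaxed SDPs), and the second is the same termwise bounding of $\tr(\Mat^{r_q}X)$ against the interval constraints of $\CCm$ that the paper carries out in block-matrix form, likewise without using positive semi-definiteness. The only differences are presentational: you expand the trace index-by-index where the paper uses blocks, and you write out both bounds where the paper does the upper bound and appeals to symmetry.
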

\begin{proof}
See Appendix.
\end{proof}
In term of complexity, SDP problems can be solved in polynomial time to an arbitrary prescribed
precision by the ellipsoid method~\cite{grotschel1988geometric}. More precisely, let $\alpha> 0$ be a given rational,
suppose that the input data of a semi-definite program are rational and suppose that an
integer $N$ is known, such that the feasible set lies inside the ball of the radius $N$ around
zero. Then a feasible solution -- the value of which is at most
at a distance $\alpha$ from the optimal value -- can be found in a time that is polynomial in the
number of bits of the input data and in $-log(\alpha)$. This latter feasible solution can be found in polynomial time by interior point methods~\cite{nesterov1994interior} if a strictly feasible solution is available. The advantage of interior methods is that they are very efficient in
practice. We refer the reader to~\cite{ramana1996semidefinite} for more information.
\begin{corollary}
The reals $\mathbf{b}^q_{SDP}$ and $\mathbf{B}^q_{SDP}$ can be computed in polynomial time.
\end{corollary}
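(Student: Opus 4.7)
The plan is to verify that the two semi-definite programs defining $\mathbf{b}^q_{SDP}$ and $\mathbf{B}^q_{SDP}$ in Proposition~\ref{coro} satisfy the hypotheses of the polynomial-time solvability result for SDPs recalled just above, so that the corollary follows by direct invocation of that result.

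First, I would make explicit that, once the rank constraint is removed, both programs are genuine linear SDPs: the objective $X \mapsto \tr(\mathbf{M}^{r_q} X)$ is linear in $X$, the constraint $X\in \mathbb{S}_{m+4}^+$ is the standard positive semi-definiteness cone, and the constraints defining $\mathcal{C}^m$ (equations~\eqref{X_constraint}--\eqref{Y_constraint}) are entry-wise box constraints together with the single linear equality $Y_{(m+4),(m+4)}=1$. All of these are linear in the entries of $X$, so the problem is in the standard SDP form handled by the algorithms cited in~\cite{grotschel1988geometric,nesterov1994interior}.

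Second, I would exhibit an explicit polynomial bound $N$ on the radius of the feasible set, which is the key hypothesis for the ellipsoid method. By inspection of~\eqref{X_constraint}--\eqref{Y_constraint}, each entry $X_{i,j}$ lies in a compact subset of $[-1,1]$, and the bottom-right entry is fixed to $1$. Consequently the Frobenius norm of any feasible $X$ is at most $\sqrt{(m+4)^2} = m+4$, giving an explicit polynomial $N$ in the input size. Combined with the rationality of the data (which we may assume, as is standard, about $q$ and hence about $\mathbf{M}^{r_q}$), this puts us exactly in the setting recalled before the corollary.

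Third, applying the cited complexity statement, for any rational precision $\alpha>0$ a feasible solution within distance $\alpha$ of the optimum can be computed in time polynomial in the bit size of the input and in $-\log(\alpha)$; choosing any fixed $\alpha$ yields the claim. I would also mention that when a strictly feasible point is available (which can be exhibited by perturbing, e.g., a diagonal matrix whose entries satisfy the strict box constraints), interior-point methods give the same polynomial-time guarantee with much better practical behaviour. There is essentially no obstacle here beyond collecting these observations; the only mildly non-trivial point is producing the explicit polynomial radius $N$ of the feasible set, and this is immediate from the fact that every entry of $X$ is bounded by $1$.
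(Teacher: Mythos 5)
Your proposal is correct and follows essentially the same route as the paper, which states the corollary as an immediate consequence of the complexity facts recalled in the preceding paragraph (ellipsoid/interior-point solvability of linear SDPs with rational data and a polynomially bounded feasible region). Your added details --- linearity of the objective and of the $\CCm$ constraints, and the explicit radius bound $m+4$ coming from the entry-wise bounds --- are exactly the hypotheses the paper leaves implicit, so there is no divergence to report.
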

The Figure~\ref{fig:projection} illustrates such concretization on the quadratic
zonotopes defined in Example~\ref{ex:quadzon}.
\begin{figure}
\vspace{-1cm}
  \centering
\begin{subfigure}{0.4\textwidth}
  \resizebox{\textwidth}{!}{\begin{tikzpicture}
\draw [black!50!red,fill = white!90!cyan] (-4,-2) -- (2,-2) -- (2,4) -- (-4,4) -- cycle;

\draw [black!50!cyan] (-4,2) -- (-2,0);
\draw [black!50!cyan] (-2,0) .. controls (-0.8,-0.45) .. (-0.4,-0.7);
\draw [black!50!cyan] (-0.4,-0.7) -- (0,-2);
\draw [black!50!cyan] (0,-2) .. controls (0.3,-1.6) .. (0,-0.7);
\draw [black!50!cyan] (0,-0.7) -- (-1.75,3.6);
\draw [black!50!cyan] (-1.75,3.6) .. controls (-1.85,3.8) .. (-2,4);
\draw [black!50!cyan] (-2,4) -- (-1.8,3.4);
\draw [black!50!cyan] (-1.8,3.4) .. controls (-2,2.85) .. (-4,2);

\draw [->] (-5,0) -- (3,0) node[anchor=north] {$x$} ;
\draw [->] (0,-3) -- (0,5) node[anchor=east] {$y$};
\foreach \x in {-4,-2,0,2} {
\draw (\x,-1pt) -- (\x,1pt)  node[anchor=north east] {\x};
}
\foreach \y in {-2,4} {
\draw (-1pt,\y) -- (1pt,\y) node[anchor=north east] {\y};
}
\end{tikzpicture}}
  \caption{using~\cite{DBLP:journals/rc/MessineT06}}
  \end{subfigure}\hspace{1cm}
\begin{subfigure}{0.4\textwidth}
  \resizebox{\textwidth}{!}{\begin{tikzpicture}
\draw [black!50!red,fill = white!90!cyan] (-4,-2) -- (0.25,-2) -- (0.25,4) -- (-4,4) -- cycle;

\draw [black!50!cyan] (-4,2) -- (-2,0);
\draw [black!50!cyan] (-2,0) .. controls (-0.8,-0.45) .. (-0.4,-0.7);
\draw [black!50!cyan] (-0.4,-0.7) -- (0,-2);
\draw [black!50!cyan] (0,-2) .. controls (0.35,-1.6) .. (0,-0.7);
\draw [black!50!cyan] (0,-0.7) -- (-1.75,3.6);
\draw [black!50!cyan] (-1.75,3.6) .. controls (-1.85,3.8) .. (-2,4);
\draw [black!50!cyan] (-2,4) -- (-1.8,3.4);
\draw [black!50!cyan] (-1.8,3.4) .. controls (-2,2.85) .. (-4,2);

\draw [->] (-5,0) -- (3,0) node[anchor=north] {$x$} ;
\draw [->] (0,-3) -- (0,5) node[anchor=east] {$y$};
\foreach \x in {-4,-2,0,2} {
\draw (\x,-1pt) -- (\x,1pt)  node[anchor=north east] {\x};
}
\foreach \y in {-2,4} {
\draw (-1pt,\y) -- (1pt,\y) node[anchor=north east] {\y};
}
\end{tikzpicture}}
  \caption{relying on SDP programming}
  \end{subfigure}
  \caption{Projection to intervals}
\label{fig:projection}
\end{figure}

In terms of related works, the use of semidefinite programming to compute interval concretisation
of
nonlinear operation for affine forms already appeared
in~\cite[Prop. 5.1.2]{ghorbal:pastel-00643442}. 
This approach appears to be the dual version of the semidefinite
programs that we presented in this paper.


\vspace{-1cm}
\section{Experimentation}
\label{sec:exp}

All presented materials has been implemented in an open-source tool written in
OCaml
\footnote{Tool and experiments available at
  \url{https://cavale.enseeiht.fr/QuadZonotopes/} }.  This tool is used for
teaching purpose and only consider simple imperative programs without function
calls. It implements interval analysis, affine and quadratic zonotopes and
provides a binding to APRON to evaluate the more complex T1P
domain~\cite{DBLP:conf/cav/GhorbalGP09}, an affine zonotope domain with
constraints. The reduced concretization is not integrated through the use of the
CSDP or Mosek SDP solvers. Due to the increase cost in terms of computation
time, it can be globally or locally activated when calling the analyzer.

The quadratic zonotope domain has been evaluated on examples provided in APRON
T1P source code, or Fluctuat distribution, as well as simple iterative
schemes. We present here the results obtained on an arctan function, the example
of~\cite{Comba93affinearithmetic} and the
Householder function analyzed in~\cite{DBLP:conf/cav/GhorbalGP09}.

Let us first consider the arctan function defined in Figure~\ref{ex:arctan} and
the analysis results in Table~\ref{tab:arctan_res}.

\begin{figure}
\centering
\begin{subfigure}{.65\textwidth}
\begin{lstlisting}[basicstyle=\small]
if (x > 1.) {	    	
 y = 1.5708 - 1/x*(1-C1/$x^2$+C2/$x^4$+C3/$x^6$+
	C4/$x^8$+C5/$x^{10}$+C6/$x^{12}$+C7/$x^{14}$+C8/$x^{16}$)
}
if (x < 1.) {	    	
 y = -1.5708 - 1/x*(1-C1/$x^2$+C2/$x^4$+C3/$x^6$+
	C4/$x^8$+C5/$x^{10}$+C6/$x^{12}$+C7/$x^{14}$+C8/$x^{16}$)
}
else {	    	
 y = x*(1-C1*$x^2$+C2*$x^4$+C3*$x^6$+
	C4*$x^8$+C5*$x^{10}$+C6*$x^{12}$+C7*$x^{14}$+C8*$x^{16}$)
}
\end{lstlisting}
\end{subfigure}
\begin{subfigure}{0.3\textwidth}
{\small
with the constants defined as:
}
\medskip

\centering
{\small
\begin{tabular}{ l c r } 
  C1 &~~& $0.0028662257$ \\      
  C2 && $-0.0161657367$ \\     
  C3 && $0.0429096138$ \\  
  C4 && $-0.0752896400$ \\ 
  C5 && $0.1065626393$ \\ 
  C6 && $-0.1420889944$ \\ 
  C7 && $0.1999355085$ \\ 
  C8 && $-0.3333314528$ \\ 
\end{tabular}
}
\end{subfigure}
\vspace{-2em}
\caption{Arctan program}
\label{ex:arctan}
\end{figure}

\begin{table}
{\small  \centering
\begin{tabularx}{\textwidth}{|X|c||c|}
\cline{2-3}
\multicolumn{1}{c|}{}& $x \in [-1, 1]$  & $x \in [-10, 10]$  \\ \hline
Domain & Bounds & Bounds \\ \hline
Interval &
[-1.919149, 1.919149] &
[-1.919149, 1.919149]
\\ \hline
Affine Zonotopes &
[-1.919149, 1.919149] &
[-2.364846, 2.364846]
\\ \hline
Quadratic Zonotopes &
[-1.002866, 1.002866] &
[-1.597501, 1.591769]
\\ \hline
Affine Constrainted Zonotopes (Apron) &
[-1.349407, 1.349407] &
[-1.477535, 1.477535]
\\ \hline
\end{tabularx}

 }

  \caption{Arctan program analysis results}
  \label{tab:arctan_res}
\end{table}

In~\cite{Comba93affinearithmetic} the authors considered the function $\sqrt(x^2
+ x - 1/2) / \sqrt(x^2 + 1/2)$ and the precision obtained using affine
arithmetics while evaluating the function on a partition of the input range as
sub-intervals. Figure~\ref{fig:stolfi} illustrates the obtained results with our different
abstractions for a division from $1$ to $14$, higher partition divisions
(eg. 500) converge in terms of precision. Quadratic zonotopes shows here to be a good
alternative to interval or affine zonotopes abstractions.

About the Householder function, it converges towards $1/ \sqrt{A}$:
\[
\begin{array}{l}
x_0 = 2^{-4}\\
x_{n+1} = x_n (1+\frac{1}{2}(1 - A x_n^2) + \frac{3}{8}(1 - A x_n^2)^2 )
\end{array}
\]
We analyzed it using loop unrolling with $A \in [16, 20]$ and compared the
global errors obtained at the $i-th$ iterate: the difference between the max and
min values. Figure~\ref{fig:househ} presents the precision obtained with
different analyses. Quadratic zonotopes provides here better bounds than affine
or interval analysis and shows to scale better than all other analyses.

For the sake of comparison we also compared with the zonotope analysis provided
in Apron as the T1P (Taylor 1 plus) abstract
domain~\cite{DBLP:conf/cav/GhorbalGP09}. We recall that this domain is not just
based on affine arithmetics but also embed linear relationships between error
terms in the zonotope. It shows better performance than affine zonotopes and a
similar extension of our quadratic zonotopes should be considered.

Finally a disappointing result is the cost of the optimized concretization. This
algorithm can potentially be used when converting a quadratic form to intervals
and impact the computation of meet and join operators. However since meet
operators are widely used in backward semantics~\cite{minephd}, the global cost
impacts widely the timing results. Moreover, in most cases the precision is not
widely impacted.

\begin{figure}
     \centering
     \begin{subfigure}[t]{.48\textwidth}
       \resizebox{\textwidth}{!}{\begin{tikzpicture}
    \begin{semilogyaxis}[
        xlabel=\# of subdivisions,
        ylabel=Global Error
    ]
\addplot plot coordinates {
(1, inf)
(2, 3.60555127546)
(3, 5.38515366544)
(4, 2.2360679775)
(5, 2.18118677732)
(6, 2.18817574984)
(7, 1.89644276761)
(8, 1.82574185835)
(9, 1.73582044704)
(10, 1.71446607998)
(11, 1.6742485743)
(13, 1.61553612293)
};
\addplot plot coordinates {
(3, 13.7982240345)
(5, 3.467800497)
(6, 3.63019491776)
(7, 2.04587966863)
(8, 1.79059409586)
(9, 1.62973296127)
(10, 1.50717293419)
(11, 1.40619933209)
(13, 1.28900163493)
};
\addplot plot coordinates {
(1, 4.10493351595)
(3, 3.87478462516)
(4, 3.5979770199)
(5, 2.2982996548)
(6, 2.19160508727)
(7, 1.68129007725)
(8, 1.58957380192)
(9, 1.41967364462)
(10, 1.39280645221)
(11, 1.28189308341)
(13, 1.20505716694)
};
\addplot plot coordinates {
(1, 3.60555127547)
(2, 2.511913348)
(3, 1.75167279806)
(4, 1.36157227004)
(5, 1.20562362195)
(6, 1.27066983516)
(7, 1.01142220257)
(8, 0.929690695564)
(9, 0.932346557942)
(10, 0.888854328833)
(11, 0.854197731019)
(13, 0.839039943028)
};
    \legend{Intervals, Zonotopes, Quad. Zonotopes, Apron T1P}
    \end{semilogyaxis}
\end{tikzpicture}}
       \caption{Stolfi~\cite{Comba93affinearithmetic} example evaluated on partitioned input range}
       \label{fig:stolfi}
     \end{subfigure}\hfill
     \begin{subfigure}[t]{.48\textwidth}
       \resizebox{\textwidth}{!}{\begin{tikzpicture}
    \begin{semilogyaxis}[
        xlabel=\# Iterations,
        ylabel=Global Error
    ]
\addplot plot coordinates {
(1, 0.00116920471191)
(2, 0.00849903242146)
(3, 0.0378319517089)
(4, 0.11731740504)
(5, 0.390339893088)
(6, 2.57667754578)
(7, 3017.31624606)
(8, 5.89192447161e+18)
(9, 1.67278001426e+95)
(10, inf)
(11, inf)
(12, inf)
(13, inf)
(14, inf)
(15, inf)
(16, inf)
(17, inf)
};
\addplot plot coordinates {
(1, 0.00117206573486)
(2, 0.00752962073792)
(3, 0.0240712391138)
(4, 0.0411152128624)
(5, 0.058064974194)
(6, 0.0803133900482)
(7, 0.120042280664)
(8, 0.225909649656)
(9, 0.808751533444)
(10, 28.6407257213)
(11, 196007551.195)
(12, 2.7122848369e+42)
(13, 1.37609290784e+213)
};
\addplot plot coordinates {
(1, 0.00117063522339)
(2, 0.00747463330942)
(3, 0.0225617484874)
(4, 0.0297545813698)
(5, 0.0328425539363)
(6, 0.0362124859843)
(7, 0.0401931385455)
(8, 0.0451515973008)
(9, 0.051822058007)
(10, 0.0617779091804)
(11, 0.0786636867022)
(12, 0.112059525806)
(13, 0.193654736744)
(14, 0.488953796065)
(15, 3.69142735195)
(16, 5152.30326609)
(17, 5.71116344207e+18)
};
\addplot plot coordinates {
(1, 0.00116920471192)
(2, 0.00749781996004)
(3, 0.0236599551905)
(4, 0.0386003808441)
(5, 0.051518562757)
(6, 0.0659521985605)
(7, 0.0855290581799)
(8, 0.118476297962)
(9, 0.192855982409)
(10, 0.469935974007)
(11, 3.69667506509)
(12, 13562.14175)
(13, 9.31816173732e+21)
(14, 1.42672434094e+111)
(15, inf)
(16, inf)
(17, inf)
};
    \legend{Intervals, Zonotopes, Quad. Zonotopes, Apron T1P}
    \end{semilogyaxis}
\end{tikzpicture}}
       \caption{Householder precision wrt. number of unrolling.}
       \label{fig:househ}
     \end{subfigure}%
\end{figure}



Generally speaking, T1P abstract domain gives in most of the case better bounds
in presence of several guards in the program. It provides a way to keep the
noise symbols when computing "meets" in such a way that the relationships between the
variables are maintained. On the other hand, the quadratic zonotope abstract
domain is very efficient when working with polynomial program computing
non-linear operations such as products. It is also more effective than the
affine set abstract domain in term of precision but our implementation does not
rely on the constrained mechanisms developed in T1P.

All experiments are developed in the Appendix~\ref{sec:exp_appendix}.



\section{Conclusion}
\label{sec-7}

Zonotopic abstractions are the current more promising analyses when it comes to
the formal verification of floating point computations such as the ones found in aircraft
controllers. The presented analysis seems an interesting alternative to affine zonotopes,
increasing precision while keeping the complexity quadratic in the
number of error terms. Quadratic zonotopes seems more suited than linear
abstractions when analyzing non linear functions such as
multiplications. Among the zoology of abstract domains, they belong to the small
set of algebraic domains with non convex and non symmetric concretization. This may be later of
great impact, e.g. when considering properties involving positivity of products of
negative error terms.

\vspace{-0,1cm}

\paragraph{Perspectives.}
On the theoretical side, it would be interesting to compare the abstraction
generated by quadratic form with respect to the classical zonotopes, generated
by affine forms. While graphically speaking quadratic zonotopes seem strictly
included in their affine counterpart, the existence of a Galois connection
between the two abstraction is non trivial to exhibit, if ever it exists.

On the application side, our comparison in the benchmarks with affine zonotopes
was a little bit biased (against us) since we considered a naive meet operator like
the one we provided. The work of~\cite{DBLP:conf/cav/GhorbalGP09},
represented as Apron T1P in the benchmark evaluation, proposes to enrich
zonotopes with linear constraints over error terms to encode intersection. This
extension of the domain seems a feasible approach in our setting. It would then
allow a stronger comparison between affine and quadratic zonotopes.

Last, both affine and quadratic arithmetics can be seen, respectively, as a
first and second order Taylor polynomial abstraction. It would be interesting to
evaluate how this approach can be extended and how it combines with other
methods aiming at regaining precision such as branch-and-bound algorithms.


\bibliographystyle{alpha}
\bibliography{biblio}

\newpage

\appendix
\section*{Appendix}
\subsection*{Proofs of Join Soundness}

\begin{proof}[Proof of Property~\ref{extensionprop}]
\label{proofextprop}
\begin{enumerate}
\item Let $i,j\in\mathbb{N}$. Let $\epsilon\in [-1,1]^m$ and $\epsilon'\in [-1,1]^{m+i+j}$ such that for all $i+1\leq k \leq m+i$$\epsilon'_k=\epsilon_{k-i}$. Let $(\epsilon_{\pm},\epsilon_+,\epsilon_-)\in [-1,1]\times [0,1]\times [-1,0]$. 
By definition of $A'$, $b'$ and $\epsilon'$, we have:

$
\begin{array}{ll}
&\ext_{i,j}(q)(\epsilon',\epsilon_{\pm},\epsilon_+,\epsilon_-)\\
=&\displaystyle{\sum_{l=i+1}^{m+i} \sum_{n=i+1}^{m+i} \epsilon'_l A'_{l,n} \epsilon'_n+\sum_{n=i+1}^{m+i} b'_{n} \epsilon'_n+c_{\pm} \epsilon_{\pm}+c_+\epsilon_++c_-\epsilon_-}\\
=&\displaystyle{\sum_{l=i+1}^{m+i} \sum_{n=i+1}^{m+i} \epsilon_{l-i} A_{l-i,n-i} \epsilon_{n-i}+\sum_{n=i+1}^{m+i} b_{n-i} \epsilon_{n-i}+c_{\pm} \epsilon_{\pm}+c_+\epsilon_++c_-\epsilon_-}\\
=&\displaystyle{\sum_{l=1}^{m} \sum_{n=1}^{m} \epsilon_l A_{l,n} \epsilon_n+\sum_{n=1}^{m} b_{n} \epsilon_n+c_{\pm} \epsilon_{\pm}+c_+\epsilon_++c_-\epsilon_-}=q(\epsilon,\epsilon_{\pm},\epsilon_+,\epsilon_-)
\end{array}
$.
\item From the first point: 
\begin{itemize}
\item $\forall\, t\in\ccm$, $\exists\, t'\in\cc^{m+p} \st q(t)=\ext_{i,j}(q)(t')$ hence $\gamma_\Q(q)\subseteq\gamma_\Q(\ext_{i,j}(q))$.
\item $\forall\, t'\in\cc^{m+p}$, $\exists\, t\in\ccm \st q(t)=\ext_{i,j}(q)(t')$ hence $\gamma_\Q(q)\supseteq \gamma_\Q(\ext_{i,j}(q))$.
\end{itemize}
\end{enumerate}

\end{proof}

\begin{proof}[Proof of Lemma~\ref{lemmajoin}]
\label{prooflemmajoin}
By definition, $\alpha_\Q$ creates a fresh noise symbol when it is called, then $q_k^{e}$ do not share noise symbols with anyone else and a $q_k^{e}$ only depends on the $m+k$-th noise symbol:
\begin{equation}
\label{eqqerr}
\begin{array}{l}
\forall\, (\epsilon_{\pm},\epsilon_+,\epsilon_-), (\epsilon'_{\pm},\epsilon'_+,\epsilon'_-)\in [-1,1]\times [0,1]\times [-1,0],\ \forall\, k\in [p],\\ 
\forall\, u,v\in [-1,1]^{m+p}\ \st\ u_{m+k}=v_{m+k},\ q_k^{e}(u,\epsilon_{\pm},\epsilon_+,\epsilon_-)= q_k^{e}(v,\epsilon'_{\pm},\epsilon'_+,\epsilon'_-)
\end{array}
\end{equation}
Let $t=(\epsilon,\epsilon_{\pm},\epsilon_+,\epsilon_-)\in\ccmp$. Define $\epsilon''\in [-1,1]^{m}$ such that $\epsilon''_k=\epsilon_k$ for all $k\in [m]$ and $t''=(\epsilon'',\epsilon_{\pm},\epsilon_+,\epsilon_-)\in\ccm$. From Property~\ref{extensionprop} and since $q''\in\ZQ{p}{m}$, for all $k\in[p]$, $q''_k(t'')=\ext_{0,p}(q''_k)(t)$.
Now define for all $k\in [p]$, $\epsilon^k\in[-1,1]^{m+p}$ such that $\epsilon_i^k=\epsilon_{m+k}$ if $i=m+k$ and 0 otherwise and $t^k=(\epsilon^k,\epsilon'_{\pm},\epsilon'_+,\epsilon'_-)$. From Equation~\eqref{eqqerr}, for all $k\in\{1,\ldots,p\}$, $q_k^{e}(t)=q_k^{e}(t^k)$. Hence, for all $k\in [p]$, $(\ext_{0,p}(q''_k)+q_k^e)(t)=\ext_{0,p}(q''_k)(t)+q_k^e(t)=q''_k(t'')+q_k^e(t^k)$. 
Finally, for all $t\in\ccmp$, there exists $t''\in\ccmp$ and $t^1,\ldots,t^p\in\ccmp$ such that for all $k\in[p]$, $\ext_{0,p}(q_k'')(t)+q_k^{e}(t)=q''_k(t'')+q_k^e(t^k)$ and thus $\gamma_{\ZQ{}{}}((\ext_{0,p}(q''_k))_{k\in [p]}+q^{e})\subseteq \gamma_{\ZQ{}{}}(q'')\oplus \prod_{k=1}^p \gamma_{\Q^{m+p}}(q_k^{err})$.  

Now let us take $t''=(\epsilon'',\epsilon_{\pm},\epsilon_+,\epsilon_-)\in\ccm$ and $t^1,\ldots,t^p\in\ccmp$. Let us define $t=(\epsilon,\epsilon_{\pm},\epsilon_+,\epsilon_-)\in\ccmp$ by for all $i\in [m+p]$, $\epsilon_i=\epsilon''_i$
if $i\in [m]$ and $\epsilon_i=\epsilon^i$ if $m+1\leq i\leq m+p$, then from Property~\ref{extensionprop}, $\ext_{0,p}(q'')(t)=q''(t'')$ and from Equation~\eqref{eqqerr}, $q^{e}(t)=(q_k^e(t^k))_{1\leq k\leq p}$. Finally, for all 
$t''\in\ccm$ and $t^1,\ldots,t^p\in\ccmp$, there exists $t\in\ccmp$ such that $q''(t'')+)=(q_k(t^k))_{k\in [p]}=(\ext_{0,p}(q'')+q^e)(t)$ and thus $\gamma_{\ZQ{}{}}(\ext_{0,p}(q'')+q^{e})\supseteq \gamma_{\ZQ{}{}}(q'')\oplus \prod_{k=1}^p \gamma_{\Q^{m+p}}(q_k^{e})$. 
\end{proof}

\begin{proof}[Proof of Theorem~\ref{thjoin}]
\label{proofthjoin}
We only prove that $q\sqsubseteq_{\ZQ{}{}} q \sqcup_{\ZQ{}{}} q'$ i.e. $\gamma_{\ZQ{}{}}(q)\subseteq\gamma_{\ZQ{}{}}(\ext_{0,p}(q'')+q^{e})$. By Lemma~\ref{lemmajoin}, it is the same to prove that 
$\gamma_{\ZQ{}{}}(q)\subseteq \gamma_{\ZQ{}{}}(q'')\oplus \prod_{k=1}^p \gamma_{\Q^{m+p}}(q_k^{e})$. This is equivalent to 
show that for all $t\in\ccm$, there exists $t''\in\ccm$ such that for all $k\in [p]$, $q_k(t)-q''_k(t'')\in \gamma_{\Q}(q_k^e)=\left[\mathbf{b}^{q_k^e},\mathbf{B}^{q_k^e}\right]$. Let for all $k\in [p]$, $r_k=q_k-q_k''\in\Q^m$ and $r'_k=q'_k-q_k''\in\Q^{m}$. From Property~\ref{concabst} and the second point of Property~\ref{extensionprop}, we have 
$\left[\mathbf{b}^{q_k^e},\mathbf{B}^{q_k^e}\right]=\gamma_\Q(r_k)\sint \gamma_\Q(r'_k)$ and from the definition of 
$\sint$, we have $\mathbf{b}^{q_k^e}=\min\left(\mathbf{b}^{r^k},\mathbf{b}^{r'_k}\right)$ and $\mathbf{B}^{q_k^e}=\max\left(\mathbf{B}^{r^k},\mathbf{B}^{r'_k}\right)$. Finally, it suffices to show that 
that for all $t\in\ccm$, there exists $t''\in\ccm$ such that for all $k\in [p]$, $q_k(t)-q''_k(t'')\geq  \min\left(\mathbf{b}^{r^k},\mathbf{b}^{r'_k}\right)$ and $q_k(t)-q''_k(t'')\leq  \max\left(\mathbf{B}^{r^k},\mathbf{B}^{r'_k}\right)$.
Let $t\in \ccm$ and let us take $t''=t$, we have for all $k\in [p]$,  
$\min(\mathbf{b}^{r_k},\mathbf{b}^{r'_k})\leq\mathbf{b}^{r_k} \leq q_k(t'')-q''_k(t'')=q_k(t)-q''_k(t'')\leq \mathbf{B}^{r_k}\leq \max\left(\mathbf{B}^{r^k},\mathbf{B}^{r'_k}\right)$.
\end{proof}

\subsection*{Proofs of bounds improvements}
We remind a classical result on semidefinite positive matrices:
($X\in\mathbb{S}_n^+\ \land\ \operatorname{rk}(X)=1$) if and only if $\exists\, x\in\rr^n\ \st\ X=xx^\intercal$. 
\begin{proof}[Proof of Lemma~\ref{translation}]
\label{translationproof}
Recall that $\ccm=[-1,1]^m\times [-1,1]\times [0,1] \times [-1,0]$.
Let $X\in\mathbb{S}_{m+4}^+$ such that $\operatorname{rk}(X)=1$ and $X\in\CCm$. Since $X\in\mathbb{S}_{m+4}^+$ and $\operatorname{rk}(X)=1$, there exists $u\in\rr^{m+3}$ and $v\in\rr$ such that $X=xx^\intercal$ with $x=(u\ v)$ and thus for all $i,j\in [m+4]$, $X_{i,j}=x_i x_j$. Now since $X\in\CCm$, from Constraint~\eqref{eq_cont:6}, $x_{m+4}x_{m+4}=v^2=1$ and then $v\in\{-1,1\}$. Using the fact that $X=(-x)(-x)^\intercal$, we can choose $v=1$. Now from Constraint~\eqref{eq_cont:3}, for all $i\in[m+3]$, $X_{i,m+4}=x_i x_{m+4}=x_i=u_i\in [-1,1]$. Finally, from Constraint~\eqref{eq_cont:4}, we get $X_{m+2,m+4}=x_{m+2} x_{m+4}=x_{m+2}=u_{m+2}\in [0,1]$ and 
from Constraint~\eqref{eq_cont:5}, $X_{m+3,m+4}=x_{m+3} x_{m+4}=x_{m+3}=u_{m+3}\in [-1,0]$. We conclude that $u\in\cc^{m+3}$.

Now let us take $X$ of the form $X=xx^\intercal$ such that $x=(u\ v)$ with $u\in\cc^{m+3}$ and $x_{m+4}=v=1$. 
Since for all $i,j\in[m+4]$, $X_{i,j}=x_i x_j$ and $u\in\cc^{m+3}$, we have readily Constraints~\eqref{X_constraint}.
\end{proof}

\begin{proof}[Proof of Theorem~\ref{improvboundsth}]
\label{proofboundsimprov}
We only prove the proposition for the upper bounds, the proof for the lower bounds can be done similarly. 
Let $X\succeq 0$ be in $\CCm$. We can write $X$ as a the following block matrix (the notations around the matrix indicates the sizes of the blocks):

\vspace{8pt}

\[
\begin{array}{cl}
\multirow{3}{*}{$\begin{pmatrix}\noalign{\vspace{-20pt}}
m & 3 & 1 \\[-3pt]
\overbrace{} & \overbrace{} & \overbrace{} \\
X^A & X^0 & X^b \\
{X^0}^\intercal & X^{00} & X^c\\
{X^b}^\intercal & {X^c}^\intercal & 1 
\end{pmatrix}$} &\big\} m\\
&\big\} 3\\
&\big\} 1\\
\end{array}
\]
We now rely on the fact that $\tr$ is linear and satisfies for all square matrices $M$, $\tr(M^\intercal)=\tr(M)$ and for all matrices $M,N$ such that 
$MN$ and $NM$ are square matrices, $\tr(MN)=\tr(NM)$. Considering this and the symmetry of $X$, we have after simplifications: 
$\tr\left(\Mat^{r_q} X\right)=\tr\left(A^q X^{A}\right)+{b^q}^\intercal X^{b}+{X^{c}}^\intercal (c_{\pm},c_+,c_-)+c^q$.
Constraints~\eqref{eq_cont:1} and~\eqref{eq_cont:3} yield to $\tr\left(A^q X^{A}\right)\leq \sum_{i=1}^m \sum_{\substack{j=1,\ldots,m\\ j\neq i}} \left|(A^q)_{i,j}\right|+\sum_{i=1}^m [(A^q)_{i,i}]^+$. Constraint~\eqref{eq_cont:2} implies that ${b^q}^\intercal X^{b_q}\leq \sum_{j=1}^{m} \left|{b^q}_{j}\right|$ and Constraints~\eqref{eq_cont:4} and~\eqref{eq_cont:5} imply that ${X^{c}}^\intercal (c_{\pm},c_+,c_-)\leq c_{\pm}+c_+$. By summation, we conclude that $\tr\left(\Mat^{r_q} X\right)\leq \mathbf{B}^q_{MT}$ for all $X\succeq 0$ in $\CCm$ and then $\mathbf{B}^q_{SDP}\leq \mathbf{B}^q_{MT}$.
\end{proof}

\subsection*{Experiments}
\label{sec:exp_appendix}
The following table summarizes the numerical results obtained when comparing
interval arithmetics, zonotopic domains with affine arithmetics and the
quadratic extension. The T1P domain is an extension of the affine zonotopes with
additional linear constraints over error terms. 

Numerical values have been truncated to ease their display. However, even in
presence of similar values, the
highlighted ones were more precise before the truncation.


\centering
\begin{longtable}
{|l
|>{$}p{2cm}<{$}|>{$}c<{$}
|>{$}p{2.5cm}<{$}|>{$}c<{$}
|>{$}p{2cm}<{$}|>{$}c<{$}
|>{$}p{2cm}<{$}|>{$}c<{$}
|}
\hline
& 
\multicolumn{2}{c|}{Intervals} & 
\multicolumn{2}{c|}{Affine Z.} &
\multicolumn{2}{c|}{Quad. Z.} 
& \multicolumn{2}{c|}{T1P} 
\\
 \cline{2-9}
& val & ms 
& val  & ms 
& val  & ms 
& val  & ms 
\\
\hline
\endhead
\hline
\endfoot
atan
& [-1.91,\allowbreak 1.91] & 7
& [-1.91,\allowbreak 1.91] & 11
& \best{[-1.00,\allowbreak 1.00]} & 3
& [-1.34,\allowbreak 1.34] & 13
\\
atan10
& [-1.91,\allowbreak 1.91] & 0
& [-2.36,\allowbreak 2.36] & 6
& [-1.59,\allowbreak 1.59] & 8
& \best{[-1.47,\allowbreak 1.47]} & 8
\\
stolfi1
& [0.,\allowbreak \infty] & 0
& [-\infty,\allowbreak +\infty] & 0
& [-0.85,\allowbreak 3.25] & 5
& \best{[0.,\allowbreak 3.60]} & 6
\\
stolfi2
& [0.,\allowbreak 3.60] & 7
& [-\infty,\allowbreak +\infty] & 3
& [-\infty,\allowbreak +\infty] & 6
& \best{[-0.,\allowbreak 2.51]} & 5
\\
stolfi3
& [0.,\allowbreak 5.38] & 6
& [-2.98,\allowbreak 10.81] & 3
& [-0.62,\allowbreak 3.24] & 4
& \best{[-0.,\allowbreak 1.75]} & 6
\\
stolfi4
& [0.,\allowbreak 2.23] & 11
& [-\infty,\allowbreak +\infty] & 8
& [-0.33,\allowbreak 3.26] & 0
& \best{[0.35,\allowbreak 1.71]} & 8
\\
stolfi5
& [0.,\allowbreak 2.18] & 0
& [-0.42,\allowbreak 3.03] & 4
& [0.08,\allowbreak 2.38] & 4
& \best{[0.34,\allowbreak 1.55]} & 7
\\
stolfi6
& [0.,\allowbreak 2.18] & 3
& [-0.71,\allowbreak 2.91] & 9
& [0.11,\allowbreak 2.30] & 11
& \best{[0.38,\allowbreak 1.65]} & 8
\\
stolfi7
& [0.,\allowbreak 1.89] & 6
& [0.19,\allowbreak 2.23] & 3
& [0.29,\allowbreak 1.97] & 6
& \best{[0.45,\allowbreak 1.46]} & 8
\\
stolfi8
& [-0.,\allowbreak 1.82] & 4
& [0.20,\allowbreak 2.] & 0
& [0.29,\allowbreak 1.88] & 3
& \best{[0.47,\allowbreak 1.40]} & 3
\\
stolfi9
& [0.,\allowbreak 1.73] & 10
& [0.26,\allowbreak 1.89] & 6
& [0.33,\allowbreak 1.75] & 10
& \best{[0.47,\allowbreak 1.40]} & 7
\\
stolfi10
& [0.,\allowbreak 1.71] & 11
& [0.32,\allowbreak 1.82] & 0
& [0.36,\allowbreak 1.75] & 5
& \best{[0.49,\allowbreak 1.38]} & 0
\\
stolfi11
& [0.,\allowbreak 1.67] & 3
& [0.34,\allowbreak 1.74] & 3
& [0.39,\allowbreak 1.67] & 4
& \best{[0.51,\allowbreak 1.36]} & 15
\\
stolfi13
& [0.,\allowbreak 1.61] & 10
& [0.38,\allowbreak 1.67] & 0
& [0.41,\allowbreak 1.61] & 8
& \best{[0.51,\allowbreak 1.35]} & 6
\\
stolfi30
& [0.35,\allowbreak 1.43] & 10
& [0.48,\allowbreak 1.44] & 13
& [0.48,\allowbreak 1.43] & 16
& \best{[0.53,\allowbreak 1.31]} & 11
\\
stolfi40
& [0.40,\allowbreak 1.40] & 11
& [0.49,\allowbreak 1.40] & 15
& [0.50,\allowbreak 1.40] & 14
& \best{[0.53,\allowbreak 1.31]} & 15
\\
stolfi50
& [0.43,\allowbreak 1.38] & 12
& [0.50,\allowbreak 1.38] & 24
& [0.50,\allowbreak 1.38] & 26
& \best{[0.53,\allowbreak 1.30]} & 18
\\
stolfi55
& [0.44,\allowbreak 1.37] & 14
& [0.51,\allowbreak 1.37] & 28
& [0.51,\allowbreak 1.37] & 27
& \best{[0.53,\allowbreak 1.30]} & 19
\\
stolfi100
& [0.48,\allowbreak 1.34] & 27
& [0.52,\allowbreak 1.34] & 69
& [0.52,\allowbreak 1.34] & 52
& \best{[0.54,\allowbreak 1.30]} & 30
\\
stolfi200
& [0.51,\allowbreak 1.32] & 40
& [0.53,\allowbreak 1.32] & 272
& [0.53,\allowbreak 1.32] & 209
& \best{[0.54,\allowbreak 1.30]} & 48
\\
stolfi300
& [0.52,\allowbreak 1.31] & 50
& [0.53,\allowbreak 1.31] & 710
& [0.53,\allowbreak 1.31] & 439
& \best{[0.54,\allowbreak 1.30]} & 90
\\
stolfi400
& [0.52,\allowbreak 1.31] & 67
& [0.53,\allowbreak 1.31] & 1359
& [0.53,\allowbreak 1.31] & 842
& \best{[0.54,\allowbreak 1.30]} & 111
\\
Ln1px
& [-0.00,\allowbreak 0.46] & 4
& [-0.00,\allowbreak 0.40] & 8
& \best{[-0.00,\allowbreak 0.40]} & 5
& [-5.14e-05,\allowbreak 0.40] & 3
\\
householder\_orig
& \best{[0.11,\allowbreak 0.11]} & 5
& [0.11,\allowbreak 0.11] & 7
& [0.11,\allowbreak 0.11] & 0
& [0.11,\allowbreak 0.11] & 2
\\
householder\_orig
& [0.17,\allowbreak 0.18] & 0
& [0.17,\allowbreak 0.18] & 8
& [0.17,\allowbreak 0.18] & 0
& \best{[0.17,\allowbreak 0.18]} & 7
\\
householder\_orig
& [0.21,\allowbreak 0.24] & 4
& [0.21,\allowbreak 0.24] & 7
& [0.21,\allowbreak 0.24] & 0
& \best{[0.21,\allowbreak 0.24]} & 9
\\
householder\_orig
& [0.17,\allowbreak 0.29] & 6
& [0.22,\allowbreak 0.25] & 6
& \best{[0.22,\allowbreak 0.24]} & 11
& [0.22,\allowbreak 0.25] & 3
\\
householder\_orig
& [0.03,\allowbreak 0.42] & 0
& [0.22,\allowbreak 0.25] & 10
& \best{[0.22,\allowbreak 0.24]} & 3
& [0.22,\allowbreak 0.25] & 3
\\
householder\_orig
& [-0.90,\allowbreak 1.66] & 7
& [0.22,\allowbreak 0.25] & 12
& \best{[0.22,\allowbreak 0.24]} & 5
& [0.22,\allowbreak 0.25] & 0
\\
householder\_orig
& [-1117.82,\allowbreak 1899.48] & 11
& [0.22,\allowbreak 0.25] & 26
& \best{[0.22,\allowbreak 0.24]} & 7
& [0.22,\allowbreak 0.25] & 3
\\
householder\_orig
& [-2.18e^{+18},\allowbreak 3.70e^{+18}] & 3
& [0.22,\allowbreak 0.25] & 16
& \best{[0.22,\allowbreak 0.24]} & 3
& [0.22,\allowbreak 0.25] & 14
\\
householder\_orig
& [-6.19e^{+94},\allowbreak 1.05e^{+95}] & 4
& [0.22,\allowbreak 0.25] & 32
& \best{[0.22,\allowbreak 0.25]} & 12
& [0.22,\allowbreak 0.25] & 11
\\
householder\_orig
& [-\infty,\allowbreak \infty] & 4
& [0.22,\allowbreak 0.25] & 39
& \best{[0.22,\allowbreak 0.25]} & 7
& [0.22,\allowbreak 0.25] & 4
\\
householder\_orig
& [-\infty,\allowbreak \infty] & 8
& [0.22,\allowbreak 0.25] & 34
& \best{[0.22,\allowbreak 0.25]} & 6
& [0.22,\allowbreak 0.25] & 7
\\
householder\_orig
& [-\infty,\allowbreak \infty] & 7
& [0.22,\allowbreak 0.25] & 37
& \best{[0.22,\allowbreak 0.25]} & 4
& [0.22,\allowbreak 0.25] & 8
\\
householder\_orig
& [-\infty,\allowbreak \infty] & 11
& [0.22,\allowbreak 0.25] & 20
& \best{[0.22,\allowbreak 0.25]} & 7
& [0.22,\allowbreak 0.25] & 3
\\
householder\_orig
& [-\infty,\allowbreak \infty] & 0
& [0.22,\allowbreak 0.25] & 22
& [0.22,\allowbreak 0.25] & 6
& \best{[0.22,\allowbreak 0.25]} & 7
\\
householder\_orig
& [-\infty,\allowbreak \infty] & 3
& [0.22,\allowbreak 0.25] & 11
& [0.22,\allowbreak 0.25] & 7
& \best{[0.22,\allowbreak 0.25]} & 10
\\
householder\_orig
& [-\infty,\allowbreak \infty] & 4
& [0.22,\allowbreak 0.25] & 21
& [0.23,\allowbreak 0.26] & 3
& \best{[0.22,\allowbreak 0.25]} & 9
\\
householder\_orig
& [-\infty,\allowbreak \infty] & 7
& [0.22,\allowbreak 0.25] & 21
& [0.24,\allowbreak 0.27] & 6
& \best{[0.22,\allowbreak 0.25]} & 11
\\
householder
& \best{[0.11,\allowbreak 0.11]} & 7
& [0.11,\allowbreak 0.11] & 2
& [0.11,\allowbreak 0.11] & 4
& [0.11,\allowbreak 0.11] & 7
\\
householder
& [0.17,\allowbreak 0.18] & 5
& [0.17,\allowbreak 0.18] & 10
& \best{[0.17,\allowbreak 0.18]} & 10
& [0.17,\allowbreak 0.18] & 3
\\
householder
& [0.21,\allowbreak 0.24] & 4
& [0.21,\allowbreak 0.24] & 7
& \best{[0.21,\allowbreak 0.24]} & 5
& [0.21,\allowbreak 0.24] & 3
\\
householder
& [0.17,\allowbreak 0.29] & 0
& [0.21,\allowbreak 0.25] & 6
& \best{[0.22,\allowbreak 0.25]} & 11
& [0.21,\allowbreak 0.25] & 5
\\
householder
& [0.03,\allowbreak 0.42] & 4
& [0.20,\allowbreak 0.26] & 12
& \best{[0.21,\allowbreak 0.25]} & 20
& [0.21,\allowbreak 0.26] & 11
\\
householder
& [-0.90,\allowbreak 1.66] & 6
& [0.19,\allowbreak 0.27] & 4
& \best{[0.21,\allowbreak 0.25]} & 16
& [0.20,\allowbreak 0.26] & 12
\\
householder
& [-1117.82,\allowbreak 1899.48] & 0
& [0.17,\allowbreak 0.29] & 11
& \best{[0.21,\allowbreak 0.25]} & 13
& [0.19,\allowbreak 0.27] & 4
\\
householder
& [-2.18e^{+18},\allowbreak 3.70e^{+18}] & 3
& [0.12,\allowbreak 0.34] & 10
& \best{[0.21,\allowbreak 0.25]} & 9
& [0.17,\allowbreak 0.29] & 7
\\
householder
& [-6.19e^{+94},\allowbreak 1.05e^{+95}] & 3
& [-0.16,\allowbreak 0.64] & 6
& \best{[0.21,\allowbreak 0.26]} & 16
& [0.14,\allowbreak 0.33] & 10
\\
householder
& [-\infty,\allowbreak \infty] & 3
& [-14.08,\allowbreak 14.55] & 8
& \best{[0.20,\allowbreak 0.26]} & 15
& [0.00,\allowbreak 0.47] & 9
\\
householder
& [-\infty,\allowbreak \infty] & 0
& [-98003775.36,\allowbreak 98003775.83] & 11
& \best{[0.20,\allowbreak 0.27]} & 22
& [-1.45,\allowbreak 2.24] & 9
\\
householder
& [-\infty,\allowbreak \infty] & 7
& [-1.35e^{+42},\allowbreak 1.35e^{+42}] & 11
& \best{[0.18,\allowbreak 0.30]} & 16
& [-5335.40,\allowbreak 8226.73] & 10
\\
householder
& [-\infty,\allowbreak \infty] & 4
& [-6.88e^{+212},\allowbreak 6.88e^{+212}] & 9
& \best{[0.15,\allowbreak 0.35]} & 11
& [-3.66e^{+21},\allowbreak 5.65e^{+21}] & 5
\\
householder
& [-\infty,\allowbreak \infty] & 3
& [-\infty,\allowbreak +\infty] & 13
& \best{[0.04,\allowbreak 0.52]} & 11
& [-5.61e^{+110},\allowbreak 8.65e^{+110}] & 7
\\
householder
& [-\infty,\allowbreak \infty] & 5
& [-\infty,\allowbreak +\infty] & 7
& \best{[-1.41,\allowbreak 2.27]} & 11
& [-\infty,\allowbreak \infty] & 6
\\
householder
& [-\infty,\allowbreak \infty] & 5
& [-\infty,\allowbreak +\infty] & 10
& \best{[-2570.28,\allowbreak 2582.02]} & 11
& [-\infty,\allowbreak \infty] & 16
\\
householder
& [-\infty,\allowbreak \infty] & 0
& [-\infty,\allowbreak +\infty] & 10
& \best{[-2.85e^{+18},\allowbreak 2.85e^{+18}]} & 7
& [-\infty,\allowbreak \infty] & 6
\\
householder
& [-\infty,\allowbreak \infty] & 5
& [-\infty,\allowbreak +\infty] & 17
& \best{[-4.62e^{+93},\allowbreak 4.62e^{+93}]} & 18
& [-\infty,\allowbreak \infty] & 11
\\
controller
& [-0.21,\allowbreak 0.21] & 7
& \best{[-0.15,\allowbreak 0.15]} & 53
& \best{[-0.15,\allowbreak 0.15]} & 21
& \best{[-0.15,\allowbreak 0.15]} & 10
\\
controller
& [-0.42,\allowbreak 0.42] & 12
& \best{[-0.19,\allowbreak 0.19]} & 191
& \best{[-0.19,\allowbreak 0.19]} & 20
& \best{[-0.19,\allowbreak 0.19]} & 18
\\
controller
& [-0.65,\allowbreak 0.65] & 20
& \best{[-0.20,\allowbreak 0.20]} & 518
& \best{[-0.20,\allowbreak 0.20]} & 49
& \best{[-0.20,\allowbreak 0.20]} & 21
\\
controller
& [-1.25,\allowbreak 1.25] & 30
& \best{[-0.23,\allowbreak 0.23]} & 1865
& \best{[-0.23,\allowbreak 0.23]} & 74
& [-0.23,\allowbreak 0.23] & 35
\\
SinCos
& [0.86,\allowbreak 1.18] & 8
& [0.99,\allowbreak 1.01] & 10
& [0.99,\allowbreak 1.00] & 7
& \best{[0.99,\allowbreak 1.00]} & 2
\\

\end{longtable}

\end{document}